\begin{document}
\newtheorem*{proof}{Proof}
\newtheorem{lemma}{Lemma}
\newtheorem{proposition}{Proposition}
\newtheorem{corollary}{Corollary}
\newtheorem{definition}{Definition}
\newtheorem{theorem}{Theorem}
\newtheorem{remark}{Remark}
\newtheorem{example}{Example}
\theoremsymbol{$\blacksquare$}
\def\QEDclosed{\mbox{\rule[0pt]{1.3ex}{1.3ex}}} 
\def\QEDopen{{\setlength{\fboxsep}{0pt}\setlength{\fboxrule}{0.2pt}\fbox{\rule[0pt]{0pt}{1.3ex}\rule[0pt]{1.3ex}{0pt}}}} 
\def\QED{\QEDclosed} 
\def\endproof{\hspace*{\fill}~\QED\par\endtrivlist\unskip}

\title{Target Controllability of Multiagent Systems under Directed Weighted Topology}

\author{Yanan Ji, Zhijian Ji, Yungang Liu, and Chong Lin,

\thanks{This work was supported by the National Natural Science Foundation of China (Grant Nos. 62033007 and 61873136),
Taishan Scholars Climbing Program of Shandong Province of China and Taishan Scholars Project of Shandong Province of China (No. ts20190930). }
\thanks{Yanan Ji, Zhijian Ji (corresponding author), and Chong Lin are with Institute of Complexity Science, College of Automation, Qingdao University, and Shandong Key Laboratory of Industrial Control Technology, Qingdao, Shandong, 266071, China (corresponding author to provide e-mail: jizhijian@pku.org.cn). }
\thanks{Yungang Liu is with School of Control Science and
Engineering, Shandong University, Jinan, Shandong, 250061, China (e-mail: lygfr@sdu.edu.cn).}
\thanks{Corresponding author: Zhijian Ji (e-mail: jizhijian@pku.org.cn).}}

\maketitle

\begin{abstract}
In this paper, the target controllability of multiagent systems under directed weighted topology is studied. A graph partition is constructed, in which part of the nodes are divided into different cells, which are selected as leaders. The remaining nodes are divided by maximum equitable partition. By taking the advantage of reachable nodes and the graph partition, we provide a necessary and sufficient condition for the target controllability of a first-order multiagent system. It is shown that the system is target controllable if and only if each cell contains no more than one target node and there are no unreachable target nodes, with $\delta-$reachable nodes belonging to the same cell in the above graph partition. By means of controllability decomposition, a necessary and sufficient condition for the target controllability of the system  is given, as well as a target node selection method  to ensure the target controllability. In a high-order multiagent system, once the topology, leaders, and target nodes are fixed, the target controllability of the high-order multiagent system is shown to be the same to the first-order one.  This paper also considers a general linear system. If there is an independent strongly connected component that contains only target nodes and the general linear system is target controllable, then graph $\mathcal{G}$ is leader-target follower connected.
\end{abstract}

\textbf{Keyword}:
Controllability decomposition, directed weighted topology, multiagent system, target controllability

\section{Introduction}
\label{sec1}
\IEEEPARstart{I}{n} the 21st century, with the rapid development of artificial intelligence, the related researches of multiagent systems have attracted extensive attention from all walks of life. Compared with single agent-based models, multiagent systems can complete more complex tasks. For multiagent systems, each agent possesses many attributes such as autonomy, interaction, coordination, and intelligence, which, along with their mutual cooperation and collaboration, produces the ability of “$1+1>2$” in the whole system. In view of the unique advantages of distributed multiagent systems in flexibility, innovation, adaptability, and self-organization, people have conducted more in-depth research on it \cite{ref1,ref2,ref3,ref4,ref29,ref31,ref32,ref33,ref34,ref35,ref36,ref37,ref38,ref39}.  At present, multiagent systems have been widely applied in military, engineering, transportation, and other fields.

The research on controllability of multiagent systems is of great significance in practical production and life, which provides important theoretical guidance in the fields of group formation control, transmission network control, social public opinion control, and infectious disease prevention. As early as the 1960s, R.E.Kalman put forward the concept of controllability and gave the classical algebraic Kalman criterion for controllability in modern control theory \cite{ref12}. The controllability of multiagent systems was first proposed by Tanner in 2004, and a basic framework was presented to consider controllability \cite{ref13}. The controllability of multiagent systems refers to that all agents can be set to specified state from any initial state in limited time. 

In practical application, many systems are composed of  a large number of individuals and complex structures, which makes the control of the entire system difficult. However,  target controllability only requires that a few nodes are controllable. Many scholars have shown great interest to target controllability. For example, A. Trontis et al. studied the target control for linear hybrid systems \cite {ref14}. Ali Ebrahimi et al. made significant progress in target controllability with minimal mediators in complex biological networks \cite{ref15}. Henk J. van Waarde et al. proposed a distance-based approach to strong target control of dynamical networks \cite{ref16}, etc. The research into target controllability is ongoing and has been developed in biology \cite{ref15}, control science and engineering \cite {ref17}\cite{ref18}, medicine \cite{ref19}, and other fields. At present, the research on target controllability mostly focuses on complex networks \cite{ref20,ref21,ref22}, but there are a few results about multiagent systems. In this paper, we study the target controllability of multi-agent systems under the directed weighted topology, and improve the proof of our conclusions in  \cite{ref40}. A necessary condition for the target controllability of a first-order multiagent system is derived, that is,  if the multiagent system with directed weighted topology is target controllable, then each node in the target set is reachable from the leader set. We give a necessary and sufficient condition for the target reachable node, that is, each target node is reachable from the leader set if and only if there are no zero rows in the target controllability matrix. Based on controllability decomposition, a necessary and sufficient condition for the target controllability is proposed, as well as a target node selection method  to ensure the target controllability. In a high-order multiagent system, once the topology, leaders, and target nodes are fixed, the target controllability of the high-order multiagent system is the same to the first-order one. We also study a general linear system.  Based on independent strongly connected components, a necessary condition for the target controllability of the system is given, that is, if there is an independent strongly connected component that contains only target nodes and the system  is target controllable, the graph $\mathcal{G}$ is leader-target follower connected.

The rest of this paper is organized as follows: Section \ref{sec2} shows preliminaries and some notations. Section \ref{sec3} is about the target controllability of a first-order multiagent system. In Section \ref{sec4}, we study the target controllability of a high-order system.  Section \ref{sec5} presents the target controllability of a general linear system. Illustrative examples are presented in Section \ref{sec7}. Finally, the conclusion is arranged in Section \ref{sec8}.

\section{Preparatory Knowledge}
\label{sec2}
$R$, $Z$,  $Z^{+}$, and $\mathbb{N}$ represent real number set, integer set, positive integer set, and natural number set, respectively. $S$ represents a matrix of the special form (Definition \ref{def8}). $I_n$ stands for a $n\times n$ dimensional identity matrix. $0_n$ ($0_{n\times m}$) represents the block of $n\times n$($n\times m$) dimension with all elements of 0,  which can be abbreviated as 0. $\mathbb{R}^n$ and $\mathbb{R}^{n\times m}$ are Euclidean space of $n$ dimensional and real matrix set, respectively. Let $X$ and $Y$ be two subsets, and $X\backslash Y$ represent the set $\left\{ {x|x \in X,x \notin Y} \right\}$.

Consider a multiagent system represented by a directed weighted graph $\mathcal{G}=\{\mathcal{V},\mathcal{E},\mathbf{A}\}$, where $\mathcal {V} = \{v_1, v_2, \cdots, v_n \} $, $\mathcal{E}\subseteq\mathcal{V}\times\mathcal{V}$ are the node set and edge set, respectively. $e_{ij}$ in a directed graph indicates  the edge pointed from $v_j$ to $v_i$,  where $v_j$ is called the parent node of the child node $v_i$, and $v_j$ is the neighbor of $v_i$.  $\mathcal{N}_i=\{v_j|v_j\in\mathcal{V},e_{ij}\in \mathcal{E},i\neq j\}$ denotes the neighbor set of $v_i$. $(v_i,v_i)$ represents a self-loop on $v_i$. A graph is called simple if there are no multiple edges and self-loops.  Self-loop and multiple edge structures are not considered in this paper. Reassigning the weight to $1$, a path of length $m$ in $\mathcal{G}$ is given by a sequence of distinct vertices $v_{i_0},v_{i_1},\cdots,v_{i_m}$.
The adjacency matrix $\mathbf{A}=[a_{ij}]\in \mathbb{R}^{n\times n}$ of a directed weighted graph $\mathcal{G}$ describes the information exchange between nodes of $\mathcal{G}$, which is defined as
$$\begin{aligned}
a_{ij}=
\begin{cases}
w_{ij}, &(j,i)\in \mathcal{E},i\neq j\cr
0, &i= j ,\cr \end{cases}
\end{aligned}
$$
where $w_ {ij}$ represents the weight of the edge.
Laplacian matrix $L=[l_{ij}]\in \mathbb{R}^{n\times n}$ of  $\mathcal{G}$ is defined as
$$\begin{aligned}
l_{ij}=
\begin{cases}
\sum\limits_{k=1,k\neq i}^{n}{ a_{ik}}, &i=j \cr
-a_{ij}, &i\neq j. \cr \end{cases}
\end{aligned}
$$
The degree matrix is a diagonal matrix $D(\mathcal{G})=diag\{d_1,d_2,\cdots,d_n\}$, where $d_{i}=\sum_{j \in {\mathcal{N}_i}}^{}{ a_{ij}}$. Obviously, the Laplacian matrix can also be expressed as
$$\begin{aligned}
L(\mathcal{G}) = D(\mathcal{G})-\mathbf{A}(\mathcal{G}).
\end{aligned}$$

\begin{definition}\cite{ref25}
The partition $\pi=\{\mathcal{C}_1,\mathcal{C}_2,\ldots,\mathcal{C}_\jmath\}$ of a node set $\mathcal{V}$ is called an equitable partition (EP) if for any $v_1,v_2\in \mathcal{C}_i$, $v_3,v_4\in \mathcal{C}_j$, and $i,j=1,2,\ldots,\jmath$, the following equation holds

$$
\sum_{v_3 \in \mathcal{C}_{j}, v_3 \in \mathcal{N}_{v_1}} a_{v_1 v_3}=\sum_{v_4 \in \mathcal{C}_{j}, v_4 \in \mathcal{N}_{v_2}} a_{v_2 v_4}.
$$
\end{definition}
\begin{definition}\cite{ref23}
Consider a system,
\begin{align}\label{16}
\begin{cases}
\dot{x}(t)=Ax(t)+Bu(t),\\
y(t)=\mathbf{C}x(t),
\end{cases}
\end{align}
where ${x}(t)\in{\mathbb{R}}^{n}$ is the plant state, $u(t)\in{\mathbb{R}}^{l}$ is the control input,  and $y(t)\in{\mathbb{R}}^{p}$ is the output.
System \eqref{16} is said to be output controllable on $[t_0, t_f]$, if there is a control input $u(t)$, so that for given $t_0$ and $t_f$, each initial state $x(t_0)$ can be transferred to any terminal state $y(t_f)$.
\end{definition}

\begin{lemma}\label{lem4}
\cite{ref23} System \eqref{16} is output controllable if and only if $\hat{W}=[\mathbf{C}B,\mathbf{C}AB,\mathbf{C}A^2B,\cdots,\mathbf{C}A^{n-1}B]$ is  full row rank, where  $\hat{W}$ is called the output controllability matrix. 
\end{lemma}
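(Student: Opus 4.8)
The plan is to prove this classical output controllability criterion by combining the explicit solution of the state equation with the Cayley--Hamilton theorem. First I would write the variation-of-constants formula for the state at the terminal time,
$$x(t_f)=e^{A(t_f-t_0)}x(t_0)+\int_{t_0}^{t_f}e^{A(t_f-\tau)}Bu(\tau)\,d\tau,$$
and then left-multiply by $\mathbf{C}$ to obtain the terminal output $y(t_f)=\mathbf{C}x(t_f)$. Since the term $\mathbf{C}e^{A(t_f-t_0)}x(t_0)$ is a fixed vector once the initial state is given, Definition 2 makes the system output controllable precisely when, for every prescribed target, the remaining term $\int_{t_0}^{t_f}\mathbf{C}e^{A(t_f-\tau)}Bu(\tau)\,d\tau$ can be driven to any vector of $\mathbb{R}^p$ by a suitable choice of $u$. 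Thus output controllability is equivalent to surjectivity onto $\mathbb{R}^p$ of the linear map $\mathcal{L}:u\mapsto\int_{t_0}^{t_f}\mathbf{C}e^{A(t_f-\tau)}Bu(\tau)\,d\tau$, and this condition does not depend on $x(t_0)$, consistent with the quantifier ``for each initial state'' in the definition.

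The next step is to convert this surjectivity into a finite rank condition. I would introduce the output controllability Gramian
$$W_o=\int_{t_0}^{t_f}\mathbf{C}e^{A(t_f-\tau)}BB^{\top}e^{A^{\top}(t_f-\tau)}\mathbf{C}^{\top}\,d\tau,$$
and argue that the image of $\mathcal{L}$ coincides with the column space of $W_o$, so that $\mathcal{L}$ is onto $\mathbb{R}^p$ if and only if $W_o$ has rank $p$. One direction is the standard obstruction argument: if $W_o$ is singular, there is a nonzero $z\in\mathbb{R}^p$ with $z^{\top}\mathbf{C}e^{A(t_f-\tau)}B\equiv0$, forcing $z^{\top}y(t_f)$ to be independent of $u$. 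Conversely, when $W_o$ is nonsingular, the explicit control $u(\tau)=B^{\top}e^{A^{\top}(t_f-\tau)}\mathbf{C}^{\top}W_o^{-1}\xi$ steers the integral term to any prescribed $\xi\in\mathbb{R}^p$.

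The final and crucial step is to replace the infinite family $\{\mathbf{C}e^{A(t_f-\tau)}B\}_\tau$ by the single block $\hat{W}$. Here I would invoke the Cayley--Hamilton theorem: since $A$ annihilates its own characteristic polynomial, one has $e^{A(t_f-\tau)}=\sum_{k=0}^{n-1}\alpha_k(\tau)A^k$ for scalar functions $\alpha_k$, hence
$$\mathbf{C}e^{A(t_f-\tau)}B=\sum_{k=0}^{n-1}\alpha_k(\tau)\,\mathbf{C}A^{k}B.$$
Substituting this into $\mathcal{L}u$ shows that every output produced lies in the column space of $\hat{W}=[\mathbf{C}B,\mathbf{C}AB,\ldots,\mathbf{C}A^{n-1}B]$, giving one inclusion; the reverse inclusion follows from the linear independence of the $\alpha_k(\tau)$, which lets the moment vector $\bigl(\int\alpha_k(\tau)u(\tau)\,d\tau\bigr)_k$ range over all of $\mathbb{R}^{nl}$ and thereby recovers every column of $\hat{W}$. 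Consequently $\operatorname{rank}W_o=\operatorname{rank}\hat{W}$, and output controllability holds if and only if $\hat{W}$ is full row rank. The main obstacle is exactly this last reduction — tying the continuous-time reachable subspace to the discrete rank of $\hat{W}$ — and the Cayley--Hamilton theorem is the device that makes the infinite-to-finite passage rigorous.
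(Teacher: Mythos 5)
The paper does not prove this lemma at all: it is quoted verbatim from the cited reference (Kreindler and Sarachik), so there is no in-paper argument to compare against. Your proposal is the standard textbook proof — variation of constants, reduction of output controllability to surjectivity of the input-to-output map $\mathcal{L}$, the output controllability Gramian $W_o$, and Cayley--Hamilton to pass from $\mathbf{C}e^{A(t_f-\tau)}B$ to the finite block $\hat{W}$ — and that overall architecture is correct and is exactly how this classical result is established.

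One step, however, does not hold as you stated it. In the reverse inclusion you argue that the functions $\alpha_0(\tau),\dots,\alpha_{n-1}(\tau)$ in $e^{A(t_f-\tau)}=\sum_{k=0}^{n-1}\alpha_k(\tau)A^k$ are linearly independent, so that the moment vector $\bigl(\int\alpha_k(\tau)u(\tau)\,d\tau\bigr)_k$ sweeps all of $\mathbb{R}^{nl}$. This is false in general: when the minimal polynomial of $A$ has degree $m<n$ (e.g.\ $A=0$, where $\alpha_0\equiv 1$ and $\alpha_1=\dots=\alpha_{n-1}\equiv 0$), the $\alpha_k$ are dependent and the moment vector is confined to a proper subspace, so your argument only yields $\operatorname{range}\mathcal{L}\subseteq\operatorname{col}\hat{W}$ with possibly strict inclusion. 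The standard repair avoids the $\alpha_k$ entirely: take any $z\in\mathbb{R}^p$ annihilating $\operatorname{range}\mathcal{L}$; the Gramian argument gives $z^{\top}\mathbf{C}e^{A(t_f-\tau)}B\equiv 0$ on $[t_0,t_f]$, and differentiating repeatedly at $\tau=t_f$ yields $z^{\top}\mathbf{C}A^kB=0$ for all $k\geq 0$, hence $z^{\top}\hat{W}=0$. Thus the left annihilators of $\operatorname{range}\mathcal{L}$ and of $\operatorname{col}\hat{W}$ coincide, which together with your forward inclusion gives $\operatorname{range}\mathcal{L}=\operatorname{col}\hat{W}$ and $\operatorname{rank}W_o=\operatorname{rank}\hat{W}$ without any independence assumption on the $\alpha_k$.
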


\begin{definition}
\cite{ref30} A graph ${\mathcal{G}}^{\prime}=\left\{{\mathcal{V}}^{\prime},{\mathcal{E}}^{\prime},A^{\prime}\right\}$ is called an induced subgraph of the graph $\mathcal{G}=\{\mathcal{V},\mathcal{E},\mathbf{A}\}$, if $\mathcal{V}^{\prime}\subseteq\mathcal{V}$ and ${\mathcal{E}}^{\prime}= \left\{ {(u,v)|u,v \in {\mathcal{V}}^{\prime},(u,v) \in {\mathcal{E}}} \right\}$. 
\end{definition}

\begin{definition}
\cite{ref17}\cite{ref27} A node $v_i$ is defined as reachable from a node $v_j$, if there is a directed path from  $v_j$ to  $v_i$.  Similarly,  $v_i$ is defined as reachable from the set $\mathcal{W}$, if there exists a node $v_j\in \mathcal{W}$ such that $v_i$ is reachable from $v_j$. If there is a directed path from a leader to the node $v_i$, then  $v_i$ is defined as reachable from the leader set.
\end{definition}

\begin{definition}
\cite{ref27} A digraph is strongly connected if any two of its nodes are mutually reachable. A strongly connected component  is an induced subgraph that is maximal, and subject to being strongly connected. An independent strongly connected component is an induced subgraph such that it is a strongly connected component and there are no incoming edges from any other strongly connected components to any of its nodes. 
\end{definition}

\section{target controllability of a first-order multiagent system}\label{sec3}

Consider a first-order multiagent system with $n$ agents. Let $\mathcal{V}_L=\{v_1,\cdots,v_l\}\subseteq \mathcal{V}$, which is called the leader set. The nodes contained in $\mathcal{V}_L$ are called leaders, which are mainly manipulated by external inputs. $\mathcal{V}_F=\mathcal{V}\backslash \mathcal{V}_L$ is called the follower set, and $\mathcal{V}_T=\{\bar{v}_1,\cdots,\bar{v}_p\}\subseteq \mathcal{V}$ is called the target set. This paper mainly focuses on controlling the states of target nodes. The dynamics of each agent is given by

\begin{align}\label{1}
\begin{cases}
\dot{x}_i(t)=\sum\limits_{j \in {\mathcal{N}_i}}^{}{ a_{ij}}{[{x_j(t)} - {x_i(t)}]}+u_i(t), & i\in \mathcal{V}_L \\
\dot{x}_i(t)=\sum\limits_{j \in {\mathcal{N}_i}}^{}{ a_{ij}}{[{x_j(t)} - {x_i(t)}]},  & i\in  \mathcal{V}_F \\
y_i(t)=x_i(t), & i\in \mathcal{V}_T,
\end{cases}
\end{align}
where $x_i$ is the state of agent $i$, and $u_i$ is the control input. The compact form of dynamics \eqref{1} can be rewritten as

\begin{equation}
\begin{split}
\label{2}
\dot{x}(t)&=-Lx(t)+Bu(t),\\
y(t)&=Hx(t),
\end{split}
\end{equation}
where $x(t) = [x_1(t),\cdots,x_n(t)]^T$, $u(t) = [\tilde{u}_1(t), \cdots,\tilde{u}_l(t)]^T$, and $y(t) = [x_{\bar{v}_1}(t),\cdots,x_{\bar{v}_p}(t)]^T$ represent state vector, input vector, and output vector, respectively. $B = [e_{v_1},\cdots,e_{v_l}] \in \mathbb{R}^{n\times l}$, and $H = [e_{\bar{v}_1},\cdots,e_{\bar{v}_p}] ^T\in \mathbb{R}^{p\times n}$,  where $e_i$ represents the column where the $i$-th element is 1 and the others are 0. 

\begin{definition}
\cite{ref17}\cite{ref23}\cite{ref25}
System \eqref{2} is said to be target controllable, if there is a control input $u(t)$, so that any target terminal state $y(t_f)\in \mathbb{R}^p$ can be obtained from any initial state $x(t_0)\in \mathbb{R}^n$.
\end{definition}

\begin{lemma}\label{lem1}
\cite{ref17}\cite{ref23}
The target controllability is a special kind of the output controllability. System \eqref{2} is target controllable if and only if $W=[HB,H(-L)B,\cdots,H(-L)^{n-1}B]$ is  full row rank,
where  $W$ is called the target controllability matrix. Obviously, $W=HQ$, where $Q$ is the controllability matrix of the system.
\end{lemma}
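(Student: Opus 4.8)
The plan is to observe that system \eqref{2} is nothing but a concrete instance of the general linear system \eqref{16} under the identification $A=-L$, $\mathbf{C}=H$, with the same input matrix $B$, and then to reduce the claim to Lemma \ref{lem4}. First I would compare the two controllability definitions line by line: target controllability of \eqref{2} demands that any terminal output $y(t_f)\in\mathbb{R}^p$ be attainable from any initial state $x(t_0)\in\mathbb{R}^n$ through some admissible input $u(t)$, which is verbatim the output controllability requirement for \eqref{16} once the state matrix is taken to be $-L$ and the output matrix to be $H$. This matching of the quantifier structure is exactly what justifies calling target controllability ``a special kind of output controllability'' and legitimizes the substitution into the general result.

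Next I would apply Lemma \ref{lem4} with $A=-L$ and $\mathbf{C}=H$. Since \eqref{16} is output controllable if and only if $[\mathbf{C}B,\mathbf{C}AB,\ldots,\mathbf{C}A^{n-1}B]$ is full row rank, the substitution yields precisely $W=[HB,H(-L)B,\ldots,H(-L)^{n-1}B]$, so that target controllability of \eqref{2} is equivalent to $W$ being full row rank, which proves the main assertion.

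Finally, to confirm the factorization $W=HQ$, I would write the controllability matrix of the pair $(-L,B)$ as $Q=[B,(-L)B,\ldots,(-L)^{n-1}B]$ and exploit the block-column structure of matrix multiplication: left-multiplying the $k$-th block $(-L)^{k}B$ of $Q$ by $H$ reproduces the $k$-th block $H(-L)^{k}B$ of $W$, so $HQ$ and $W$ agree block by block. I do not expect any computational obstacle here, since the whole argument is essentially a substitution into an already established criterion; the only point requiring care is the first step, namely verifying that the definitions of target controllability and output controllability coincide exactly under the identification $A=-L$, $\mathbf{C}=H$, so that the reduction to Lemma \ref{lem4} is genuine rather than merely formal.
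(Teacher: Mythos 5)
Your proposal is correct and matches the paper's intended justification: the paper treats Lemma \ref{lem1} as a direct specialization of the output controllability criterion (Lemma \ref{lem4}) under the identification $A=-L$, $\mathbf{C}=H$, which is exactly your reduction, and the factorization $W=HQ$ follows by blockwise multiplication as you describe. No gaps.
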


\begin{definition}\cite{ref22}\cite{ref24}
The target controllable subspace is defined as the space of the target state $y(t)\in \mathbb{R}^p$ which is reached from the initial state $x(t_0)=0$ by a suitable control input $u(t)$. The dimension of the target controllable subspace of system \eqref{2} is expressed as $dim (-L,B,H)$,  and $dim(-L,B,H)=rank[HB,H(-L)B,\cdots,H(-L)^{n-1}B]$.
\end{definition}

\begin{definition}\label{def8}
 A matrix is called $S$ matrix, if it can be partitioned as 
$$
S= \begin{bmatrix}
 \begin{array}{ccc}
*&0&*\\ 
*&*&*\\ 
*&0&*
\end{array}
 \end{bmatrix},
$$
where $*$ represents zero block or nonzero block, $0$ is only of the same or different zero block, and the diagonal blocks take the form of square matrix.
\end{definition}

\begin{lemma}\label{lem3}
$S$ matrix has the following special properties.
\begin{enumerate}
\item $S\times S\doteq S$;
\item $S\pm S\doteq S$;
\item $S^n\doteq\underbrace{S\times S\times\cdots \times S}_{\text{n}} \doteq S$, where $n\in \mathbb{N}$;
\item $kS \doteq S$, where $k$ is any constant.
\end{enumerate}

Symbol  $\doteq$  only means that the matrices on both sides of $\doteq $ have the same form.
\end{lemma}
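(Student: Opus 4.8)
The plan is to work entirely with the $3\times 3$ block structure of the $S$ form. Write a generic $S$ matrix as
$$
S=\begin{bmatrix} S_{11} & 0 & S_{13}\\ S_{21} & S_{22} & S_{23}\\ S_{31} & 0 & S_{33}\end{bmatrix},
$$
where $S_{11},S_{22},S_{33}$ are square, so that the row and column block partitions coincide and every block operation below is dimensionally consistent; the $(1,2)$ and $(3,2)$ blocks are zero, while the remaining blocks are arbitrary (possibly zero). With this normal form, the entire lemma reduces to a single bookkeeping principle: an operation produces an $S$ matrix precisely when it leaves the $(1,2)$ and $(3,2)$ blocks equal to zero, since every other block is permitted to be anything.

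First I would dispatch properties~4 and~2, which are immediate. For $kS$, scalar multiplication acts blockwise, so the two distinguished blocks remain $k\cdot 0=0$; hence $kS\doteq S$ for any constant $k$ (including $k=0$, where the all-zero matrix still fits the $S$ form). For $S\pm S$, assuming the two summands share the common block partition, the $(1,2)$ and $(3,2)$ blocks of the result are $0\pm 0=0$, so $S\pm S\doteq S$.

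The substantive step is property~1. Taking two $S$ matrices $A$ and $B$ with the same partition and forming the product $AB$, the only thing to verify is that the $(1,2)$ and $(3,2)$ blocks of $AB$ vanish. For $i\in\{1,3\}$ the $(i,2)$ block equals $A_{i1}B_{12}+A_{i2}B_{22}+A_{i3}B_{32}$; since $B_{12}=B_{32}=0$ by the column-two structure of $B$ and $A_{12}=A_{32}=0$ by the distinguished blocks of $A$, every term is zero, so the $(i,2)$ block of $AB$ is zero. Thus $AB$ again has the $S$ form, giving $S\times S\doteq S$. Property~3 then follows by a one-line induction: $S^1=S$, and if $S^{k}\doteq S$ then $S^{k+1}=S^{k}\cdot S\doteq S\cdot S\doteq S$ by property~1.

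I do not expect a genuine obstacle here; the computation is routine. The only point demanding care is insisting throughout on a common block partition, which is exactly what the squareness of the diagonal blocks guarantees, so that the products $A_{ij}B_{jk}$ are defined and the blockwise sums in property~2 align block-for-block.
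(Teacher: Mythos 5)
Your proof is correct and follows essentially the same route as the paper: direct verification on the $3\times 3$ block normal form, with your explicit check that the $(1,2)$ and $(3,2)$ blocks of a product vanish being a slightly more careful version of the paper's asserted block multiplication. The only cosmetic difference is that the paper's induction for property~3 starts from $S^0=I\doteq S$ (since $n\in\mathbb{N}$ may include $n=0$), whereas you start at $S^1$; the identity matrix trivially has the $S$ form, so nothing is lost.
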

\begin{proof}
\begin{enumerate}
\item $S\times S=\begin{bmatrix}\setlength{\arraycolsep}{2.5pt}
 \begin{array}{ccc}
*&0&*\\ 
*&*&*\\ 
*&0&*
\end{array}
 \end{bmatrix}\begin{bmatrix}\setlength{\arraycolsep}{2.5pt}
 \begin{array}{ccc}
*&0&*\\ 
*&*&*\\ 
*&0&*
\end{array}
 \end{bmatrix}=\begin{bmatrix}\setlength{\arraycolsep}{2.5pt}
 \begin{array}{ccc}
*&0&*\\ 
*&*&*\\ 
*&0&*
\end{array}
 \end{bmatrix} \doteq S$;

\item $S\pm S=\begin{bmatrix}\setlength{\arraycolsep}{2.5pt}
 \begin{array}{ccc}
*&0&*\\ 
*&*&*\\ 
*&0&*
\end{array}
 \end{bmatrix}\pm\begin{bmatrix}\setlength{\arraycolsep}{2.5pt}
 \begin{array}{ccc}
*&0&*\\
*&*&*\\
*&0&*
\end{array}
 \end{bmatrix}=
\begin{bmatrix}\setlength{\arraycolsep}{2.5pt}
 \begin{array}{ccc}
*&0&*\\ 
*&*&*\\ 
*&0&*
\end{array}
 \end{bmatrix} \doteq S$; 

\item $S ^0=I\doteq S$, $S^1=S\doteq S$, and $S^2=S\times S\doteq S$. So, $S^n=\underbrace{S\times \cdots \times  S}_{\text{n}} \doteq\underbrace{S\times \cdots \times S}_{\text{n-1}}\doteq \cdots \doteq  S$;

\item $kS=k\begin{bmatrix}\setlength{\arraycolsep}{2.5pt}
 \begin{array}{ccc}
*&0&*\\ 
*&*&*\\ 
*&0&*
\end{array}
 \end{bmatrix} =\begin{bmatrix}\setlength{\arraycolsep}{2.5pt}
 \begin{array}{ccc}
*&0&*\\ 
*&*&*\\ 
*&0&*
\end{array}\end{bmatrix}\doteq S$, where $k$ is any constant.
\end{enumerate}
\end{proof}
\begin{definition}
 In a directed weighted topology $\mathcal{G}$, if $v_i$ is reachable from the leader set, which can be referred to simply as the reachable node. Conversely, $v_i$ can be called a unreachable node.  Reassigning the weight to 1, then there is a shortest path from a leader to a follower through  $\delta$ nodes.  We say that this follower is $\delta-$reachable from the leader set, shortened to the $\delta-$reachable node. In addition, all leaders are also defined as  reachable from the leader set.
\end{definition}

\begin{figure}[t]
\centerline{\includegraphics[width=210pt,height=150pt]{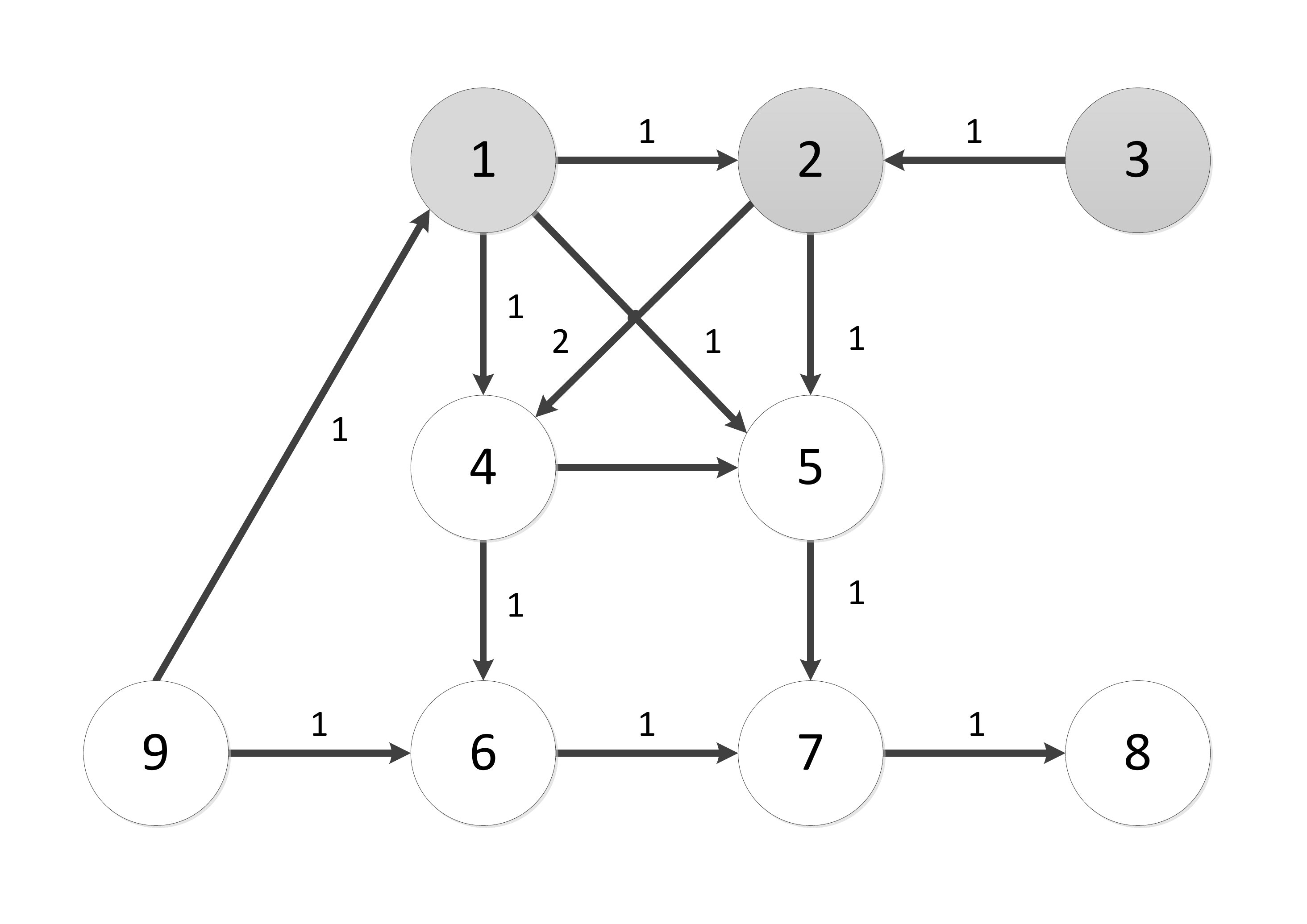}}
\caption{A directed weighted topology graph with nine nodes.\label{fig6}}
\end{figure}
 In Figure \ref{fig6}$, \mathcal{V}_L=\{1,2,3\}$ is  the leader set. Reassigning the weight to $1$, then there is a shortest path $1,4$ or $2,4$ from leaders $1$ or $2$ to the node $4$ through $0$ node. We say that this follower node $4$ is $0-$reachable from the leader set. Similarly,
node $5$ is $0-$reachable from the leader set, node $6$ is $1-$reachable from the leader set, node $7$ is $1-$reachable from the leader set, and node $8$ is $2-$reachable from the leader set. 

\begin{proposition}\label{pro1}
 If system \eqref{2} is target controllable, then each target node is reachable from the leader set. Meanwhile, that each target node is reachable from the leader set is equivalent to the fact that there are no zero rows in the target controllability matrix $W$.
\end{proposition}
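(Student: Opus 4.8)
The plan is to reduce both assertions to Lemma \ref{lem1} together with a careful reading of the entries of the target controllability matrix $W=[HB,H(-L)B,\ldots,H(-L)^{n-1}B]$ in terms of directed walks. Since the columns of $B$ are the leader indicator vectors $e_{v_j}$ and the rows of $H$ are the target indicator vectors $e_{\bar v_k}^{T}$, the $(k,j)$ entry of the block $H(-L)^{m}B$ is precisely $e_{\bar v_k}^{T}(-L)^{m}e_{v_j}=[(-L)^{m}]_{\bar v_k,v_j}$, so row $k$ of $W$ collects the numbers $[(-L)^{m}]_{\bar v_k,v_j}$ as $m$ ranges over $0,\ldots,n-1$ and $v_j$ over the leaders. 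I would first record the support and sign structure of $-L$: its off-diagonal entry $(-L)_{ab}=a_{ab}\ge 0$ is positive exactly when there is an edge $v_b\to v_a$, while each diagonal entry is $-d_a\le 0$. Hence $[(-L)^{m}]_{\bar v_k,v_j}$ is a signed sum, over all length-$m$ generalized walks from $v_j$ to $\bar v_k$, of products of these entries, where a generalized walk is allowed to ``stay put'' via a diagonal step.

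Next I prove the equivalence (B): $\bar v_k$ is reachable from the leader set if and only if row $k$ of $W$ is nonzero. For one direction, if $\bar v_k$ is reachable from no leader then there is no directed path, and hence no walk of any length, from any $v_j$ to $\bar v_k$; every entry $[(-L)^{m}]_{\bar v_k,v_j}$ therefore vanishes, including the $m=0$ term, because an unreachable $\bar v_k$ cannot itself be a leader (all leaders are reachable by definition), so $\bar v_k\neq v_j$ and $[I]_{\bar v_k,v_j}=0$. Thus row $k$ is zero. For the converse, suppose $\bar v_k$ is reachable from some leader $v_j$ and let $d$ be the length of a shortest directed path $v_j\to\bar v_k$; since such a path visits distinct vertices, $d\le n-1$, so the index $m=d$ actually occurs in $W$. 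The decisive point is that a walk of length exactly $d$ cannot use any diagonal (self) step, for each such step would waste one of the $d$ moves and make the distance $d$ unattainable; every length-$d$ walk is therefore an honest shortest path built only from off-diagonal entries. Consequently $[(-L)^{d}]_{\bar v_k,v_j}=\sum_{\text{shortest }v_j\to\bar v_k}\prod a_{\text{edges}}$ is a sum of strictly positive products, hence nonzero, which gives a nonzero entry in row $k$.

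Claim (A) then follows at once. By Lemma \ref{lem1}, system \eqref{2} is target controllable if and only if $W$ has full row rank, and a matrix possessing a zero row cannot have full row rank. Thus if some target node were unreachable from the leader set, part (B) would produce a zero row of $W$ and contradict target controllability; therefore target controllability forces every target node to be reachable from the leader set.

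The step I expect to be the main obstacle is the converse half of (B), namely exhibiting a single index $m$ with $[(-L)^{m}]_{\bar v_k,v_j}\neq 0$ in the presence of possible cancellation. For a generic power the negative diagonal contributions can cancel the positive path contributions, so one cannot simply invoke an arbitrary $m$. The resolution is to evaluate at exactly the shortest-path length $m=d$, where the diagonal entries cannot enter at all and every surviving term is a product of positive edge weights; confirming that no self-step can occur in a minimal-length walk is the crucial technical observation that removes all cancellation and makes the entry strictly positive.
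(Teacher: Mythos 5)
Your proposal is correct, but it proves the key equivalence by a genuinely different route from the paper. The paper rearranges the agents into unreachable/reachable/leader blocks, applies the $S$-matrix closure properties (Lemma \ref{lem3}) to get the zero rows of $W$ for the necessity half, and for the converse performs elementary column operations on the controllability matrix to obtain $Q_{et}=\left[\begin{smallmatrix} I_l&0&\cdots&0\\ 0&L_{21}&\cdots&L_{22}^{n-2}L_{21}\\ 0&0&\cdots&0\end{smallmatrix}\right]$, then runs an induction over the $\delta$-reachability classes showing that each relevant entry of $L_{22}^{\delta}L_{21}$ is a sum of nonnegative terms with at least one positive term, hence positive. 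You instead read the $(k,j)$ entry of $H(-L)^{m}B$ as a signed sum over length-$m$ generalized walks from leader $v_j$ to target $\bar v_k$ and evaluate at $m=d$, the shortest-path distance, where no diagonal (negative) factor can appear and every surviving term is a product of positive edge weights. The two arguments rest on the same no-cancellation principle and on the same implicit hypothesis of positive edge weights (which the paper also uses when it asserts its $\iota$'s and $\beta$'s are nonnegative with at least one positive), but yours localizes the argument to a single well-chosen power of $-L$ and is considerably shorter; the paper's heavier computation of $Q_{et}$ is not wasted, however, since its explicit row structure is reused immediately afterward in the proof of Theorem \ref{theo1} to identify a linearly independent set of rows of $Q$. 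Your handling of the edge cases (the $m=0$ block when a target node is a leader, and the bound $d\le n-1$ from distinctness of path vertices) is also sound.
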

\begin{proof}Assume that there are unreachable target nodes, and the unreachable target node set is $\mathcal{V}^{\prime}_{T} \subseteq{\mathcal{V}}_{T}$. WLOG, let the unreachable target node set be $\mathcal{V}^{\prime}_T=\{v_1,v_2,\ldots,v_s\}$, and the union of reachable target node set and leader set be $\mathcal{V}_C=\{v_{s+1},v_{s+2},\ldots,v_{s+c}\}$. The set of remaining nodes is ${\mathcal{V}_{D}}=\{v_{s+c+1},v_{s+c+2},...,v_{s+c+d}\}$.
Through rearranging the indices of agents in $\mathcal{G}$, the matrix ${L}$ is 
$$
  L= \bordermatrix{%
 &{{\mathcal{V}^{\prime}}_T}&{{\mathcal{V}}_{C}}&{{\mathcal{V}}_{D}}\cr
 {{\mathcal{V}^{\prime}}_T}&*&0_{s\times c}&*\cr
 {{\mathcal{V}}_{C}}&*&*&*\cr
 {{\mathcal{V}}_{D}}&*&0_{d\times c}&*
 }.
$$
 By Lemma \ref{lem3}, 
\[(-L)^ {n-1} = \left[ {\begin{array}{*{20}{c}}
 *&0_{s\times c}&*\\
 *&*&*\\
 *&0_{d\times c}&*
\end{array}} \right].\]
The output matrix ${H}$ can be expressed as
$$
 H=[e_1,e_2,\ldots,e_p]^T.
$$
Therefore, 
$$\begin{aligned}
W&=[HB,H(-L)B,\ldots,H(-L)^{n-1}B]\\&=H \left[ {\begin{array}{*{20}{c}}0_{s\times l}&0_{s
\times l}&\cdots&0_{s\times l}\\ \ast&\ast&\cdots&\ast\\0_{d\times l}&0_{d\times l}&\cdots&0_{d\times l}\end{array}} \right]\\&=
\left[ {\begin{array}{*{20}{c}}0_{s\times l}&0_{s\times l}&\cdots&0_{s\times l}\\ \star&\star&\cdots&\star\end{array}} \right],
\end{aligned}$$
where $\star$ is a zero or nonzero block. There are zero rows in the target controllability matrix $W$, so $rank(W)<p$. By Lemma \ref{lem1}, system \eqref{2} is not target controllable. This is a contradiction. Therefore, if the system \eqref{2} is target controllable, each target node is reachable from the leader set.

Meanwhile,
 $$\begin{aligned}
W=\left[ {\begin{array}{*{20}{c}}0_{s\times l}&0_{s\times l}&\cdots&0_{s\times l}\\ \star&\star&\cdots&\star\end{array}} \right].
\end{aligned}$$
So there are zero rows in the target controllability matrix $W$, which is a contradiction. Therefore, if there are no zero rows in the target controllability matrix $W$, each target node is reachable from the leader set.

Assume that each target node is reachable from the leader set, through rearranging the indices of agents in $\mathcal{G}$, then
$$
  L = \bordermatrix{%
 &{{\mathcal{V}_L}}&{{\mathcal{V}^{r}_L}}&{{\mathcal{V}}_{D}}\cr
 {{\mathcal{V}_L}}&-L_{11}&-L_{12}&\ast\cr
 {{\mathcal{V}^{r}_L}}&-L_{21}&-L_{22}&*\cr
 {{\mathcal{V}}_{D}}&0&0&*
 },
$$
where  $-L_{11}\in \mathbb{R}^{l\times l} $, $-L_{12}\in \mathbb{R}^{l\times r}$, $-L_{21}\in \mathbb{R}^{r\times l}$,  and $-L_{22}\in \mathbb{R}^{r\times r}$ are submatrixs of the matrix $L$.  $\mathcal{V}_L$ is  the leader set, ${\mathcal{V}^{r}_L}$ is  the reachable node set,  and $\mathcal{V}_D$ is the unreachable node set. 
The input matrix is
$$
 B=\left[ {\begin{array}{*{20}{c}}
 I_l\\
 0\\
 0
\end{array}} \right].
$$
It follows that the controllability matrix is
\begin{multline}
\begin{split}
 Q=&
\left[\begin{array}{ccc}
I_l&L_{11}&L_{11}^2+L_{12}L_{21}\\
0&L_{21}&L_{21}L_{11}+L_{22}L_{21}\\
0&0&0
\end{array}\right.\\
&\left.\begin{array}{cc}
L_{11}^3+L_{12}L_{21}L_{11}+L_{11}L_{12}^2+L_{12}L_{22}L_{12}&\ldots\\
L_{21}L_{11}^2+L_{22}L_{21}L_{11}+L_{21}L_{12}L_{21}+L_{22}^2L_{21}&\ldots\\
0&\ldots\\
\end{array}\right].\nonumber
\end{split}
\end{multline}
By elementary column transformation \cite{ref28} of block matrix $Q$, we get  that
\begin{multline}
\begin{split}
&\qquad \left[\begin{array}{ccc}
I_l&L_{11}&L_{11}^2+L_{12}L_{21}\\
0&L_{21}&L_{21}L_{11}+L_{22}L_{21}\\
0&0&0
\end{array}\right.\\
&\qquad \left.\begin{array}{cc}
L_{11}^3+L_{12}L_{21}L_{11}+L_{11}L_{12}^2+L_{12}L_{22}L_{12}&\ldots\\
L_{21}L_{11}^2+L_{22}L_{21}L_{11}+L_{21}L_{12}L_{21}+L_{22}^2L_{21}&\ldots\\
0&\ldots\\
\end{array}\right]\\&\longrightarrow \left[ {\begin{array}{*{20}{c}}
 I_l&0&0&\ldots&0\\
 0&L_{21}&L_{22}L_{21}&\ldots&L_{22}^{n-2}L_{21}\\
 0&0&0&\ldots&0\\
\end{array}} \right]\\&\,\,\,\,=\,\, Q_{et}.\nonumber
\end{split}
\end{multline}
 Next, we prove that the rows $1$ to $l+r$ of $Q_{et}$ are nonzero rows.
 Suppose $\delta=0,1,\ldots,\tau$ for $\tau\in \mathbb{N}$, then reachable nodes can be divided into $\tau+1$ classes according to the size of $\delta$. Suppose there are $\tilde{r}_\delta$ $\delta-$reachable nodes.                                                                                                                                                                                                                                                                                                                                                                                                                                                                                                                                                                                                                                                             
Let
$$\begin{aligned}
r_\delta=
\begin{cases}
\tilde{r}_{\delta}, &\delta=0\cr
r_{\delta-1}+\tilde{r}_{\delta}, &\delta=1,2,\ldots,\tau,\cr \end{cases}
\end{aligned}
$$
where $r_\tau=r$, and  $r$ is the number of reachable nodes. The following formula is obtained by the permutation of nodes,
\begin{strip}
 \hrulefill
\begin{align}
  L_{22}= \setlength{\arraycolsep}{0.01pt}{\addtocounter{MaxMatrixCols}{32}
\begin{bmatrix}
 \beta^{(1)}_{11}&\ldots&\beta^{(1)}_{1r_0}&\beta^{(1)}_{1(r_0+1)}&\ldots&\beta^{(1)}_{1r_1}&\beta^{(1)}_{1(r_1+1)}&\ldots&\beta^{(1)}_{1r_2}&\ldots&\beta^{(1)}_{1(r_{\tau-1}+1)}&\ldots&\beta^{(1)}_{1r}\\
\vdots&\ddots &\vdots&\vdots&\qquad &\vdots&\vdots&\qquad &\vdots&\qquad &\vdots&\qquad&\vdots\\
\beta^{(1)}_{r_01}&\ldots&\beta^{(1)}_{r_0r_0}&\beta^{(1)}_{r_0(r_0+1)}&\ldots&\beta^{(1)}_{r_0r_1}&\beta^{(1)}_{r_0(r_1+1)}&\ldots&\beta^{(1)}_{r_0r_2}&\ldots&\beta^{(1)}_{r_0(r_{\tau-1}+1)}&\ldots&\beta^{(1)}_{r_0r}\\
\beta^{(1)}_{(r_0+1)1}&\ldots&\beta^{(1)}_{(r_0+1)r_0}&\beta^{(1)}_{(r_0+1)(r_0+1)}&\ldots&\beta^{(1)}_{(r_0+1)r_1}&\beta^{(1)}_{(r_0+1)(r_1+1)}&\ldots&\beta^{(1)}_{(r_0+1)r_2}&\ldots&\beta^{(1)}_{(r_0+1)(r_{\tau-1}+1)}&\ldots&\beta^{(1)}_{(r_0+1)r}\\
\vdots&\qquad &\vdots&\vdots&\ddots &\vdots&\vdots&\qquad &\vdots&\qquad &\vdots&\qquad&\vdots\\
\beta^{(1)}_{r_11}&\ldots&\beta^{(1)}_{r_1r_0}&\beta^{(1)}_{r_1(r_0+1)}&\ldots&\beta^{(1)}_{r_1r_1}&\beta^{(1)}_{r_1(r_1+1)}&\ldots&\beta^{(1)}_{r_1r_2}&\ldots&\beta^{(1)}_{r_1(r_{\tau-1}+1)}&\ldots&\beta^{(1)}_{r_1r}\\
0&\ldots&0&\beta^{(1)}_{(r_1+1)(r_0+1)}&\ldots&\beta^{(1)}_{(r_1+1)r_1}&\beta^{(1)}_{(r_1+1)(r_1+1)}&\ldots&\beta^{(1)}_{(r_1+1)r_2}&\ldots&\beta^{(1)}_{(r_1+1)(r_{\tau-1}+1)}&\ldots&\beta^{(1)}_{(r_1+1)r}\\
\vdots&\qquad &\vdots&\vdots&\qquad &\vdots&\vdots&\ddots &\vdots&\qquad &\vdots&\qquad&\vdots\\
0&\ldots&0&\beta^{(1)}_{r_2(r_0+1)}&\ldots&\beta^{(1)}_{r_2r_1}&\beta^{(1)}_{r_2(r_1+1)}&\ldots&\beta^{(1)}_{r_2r_2}&\ldots&\beta^{(1)}_{r_2(r_{\tau-1}+1)}&\ldots&\beta^{(1)}_{r_2r}\\
\vdots&\qquad &\vdots&\vdots&\qquad &\vdots&\vdots&\qquad &\vdots&\ddots&\vdots&\qquad&\vdots\\
0&\ldots&0&0&\ldots&0&0&\ldots&0&\ldots&\beta^{(1)}_{(r_{\tau-1}+1)(r_{\tau-1}+1)}&\ldots&\beta^{(1)}_{(r_{\tau-1}+1)r}\\
\vdots&\qquad &\vdots&\vdots&\qquad &\vdots&\vdots&\qquad &\vdots&\qquad &\vdots&\ddots&\vdots\\
0&\ldots&0&0&\ldots&0&0&\ldots&0&\ldots&\beta^{(1)}_{r(r_{\tau-1}+1)}&\ldots&\beta^{(1)}_{rr}\\
\end{bmatrix}},\nonumber
\end{align}
\hrulefill
\end{strip}
 $$L_{21}=
\begin{bmatrix}
 \iota_{(l+1)(l+1)}&\ldots&\iota_{(l+r_0)(l+1)}& 0& \ldots&0\\
\vdots&\qquad &\vdots&\vdots&\qquad&\vdots\\
\iota_{(l+1)(2l)}&\ldots&\iota_{(l+r_0)(2l)}&0&\ldots&0\\
\end{bmatrix}^T,$$
where $\iota_{\tilde{\epsilon}_0(l+1)},\ldots,\iota_{\tilde{\epsilon}_0(2l)}$ are all nonnegative and at least one value of $\iota_{\tilde{\epsilon}_0(l+1)},\ldots,\iota_{\tilde{\epsilon}_0(2l)}$ is  positive, $\tilde{\epsilon}_0=l+1,\ldots,l+r_0$; and $\beta^{(1)}_{\epsilon_{\bar{\delta}}(r_{\bar{\delta}-2}+1)},\ldots,\beta^{(1)}_{\epsilon_{\bar{\delta}} r_{\bar{\delta}-1}}$ are all nonnegative and at least one value of $\beta^{(1)}_{\epsilon_{\bar{\delta}}(r_{\bar{\delta}-2}+1)},\ldots,\beta^{(1)}_{\epsilon_{\bar{\delta}} r_{\bar{\delta}-1}}$ is positive, $\epsilon_{\bar{\delta}}=r_{\bar{\delta}-1}+1,\ldots,r_{\bar{\delta}}$, ${\bar{\delta}}=1,2,\ldots,\tau$, and $r_{-1}=0$.
Let $$L_{22}^\delta=\left[\begin{array}{*{20}{c}}
\beta_{11}^{(\delta)}&\ldots& \beta_{1r}^{(\delta)}\\
 \vdots&\ddots&\vdots\\
\beta_{r1}^{(\delta)}&\ldots&\beta_{rr}^{(\delta)}\\
\end{array}\right],$$$$
L_{22}^\delta L_{21}=\setlength{\arraycolsep}{1.5pt} {\addtocounter{MaxMatrixCols}{32}
\begin{bmatrix}
\iota_{(l+1)(\delta l+l+1)}&\ldots& \iota_{(l+1)(\delta l+2l)}\\
 \vdots&\ddots&\vdots\\
\iota_{(l+r)(\delta l+l+1)}&\ldots&\iota_{(l+r)(\delta l+2l)}\\
\end{bmatrix}}.$$
There is a block $I_l$, which consists of elements at the intersection of the first $l$ rows and the first $l$ columns of the matrix $Q_{et}$. So rows $1$ to $l$ of matrix $Q_{et}$ are nonzero rows.
\begin{enumerate}
\item  $\iota_{\tilde{\epsilon}_0(l+1)},\ldots,\iota_{\tilde{\epsilon}_0(2l)}$ are all nonnegative and at least one value of $\iota_{\tilde{\epsilon}_0(l+1)},\ldots,\iota_{\tilde{\epsilon}_0(2l)}$ is  positive. Hence the rows  $l+1$ to $l+r_0$ of matrix $Q_{et}$ are nonzero rows. 

\item $\beta^{(1)}_{\epsilon_11},\ldots,\beta^{(1)}_{\epsilon_1 r_0}$ are all nonnegative and at least one value of $\beta^{(1)}_{\epsilon_11},\ldots,\beta^{(1)}_{\epsilon_1 r_0}$ is positive. Suppose $\beta^{(1)}_{\epsilon_1j_1}$ is positive for $j_1\in\{1,\ldots,r_0\}$, then at least one value in $\beta^{(1)}_{\epsilon_1j_1}\iota_{\tilde{\epsilon}_0(l+1)}$,$\beta^{(1)}_{\epsilon_1j_1}\iota_{\tilde{\epsilon}_0(l+2)}$,$\ldots$,$\beta^{(1)}_{\epsilon_1j_1}\iota_{\tilde{\epsilon}_0(2l)}$  is positive. By calculating $L_{22}L_{21}$,
$\iota_{(l+r_0+1)(2l+1)}=\beta^{(1)}_{(r_0+1)1}\iota_{(l+1)(l+1)}+\ldots+\beta^{(1)}_{(r_0+1)r_0}\iota_{(l+r_0)(l+1)}$, 
$\iota_{(l+r_0+1)(2l+2)}=\beta^{(1)}_{(r_0+1)1}\iota_{(l+1)(l+2)}+\ldots+\beta^{(1)}_{(r_0+1)r_0}\iota_{(l+r_0)(l+2)}$, 
$\ldots$, 
$\iota_{(l+r_0+1)(3l)}=\beta^{(1)}_{(r_0+1)1}\iota_{(l+1)(2l)}+\ldots+\beta^{(1)}_{(r_0+1)r_0}\iota_{(l+r_0)(2l)}$. 
$\beta^{(1)}_{(r_0+1)1}$, $\beta^{(1)}_{(r_0+1)2}$, $\ldots$, $\beta^{(1)}_{(r_0+1)r_0}$, $\iota_{\tilde{\epsilon}_0(l+1)}$, $\ldots$, $\iota_{\tilde{\epsilon}_0(2l)}$ are nonnegative. So $\beta^{(1)}_{(r_0+1)1}\iota_{(l+1)(l+1)}$, $\ldots$, $\beta^{(1)}_{(r_0+1)r_0}\iota_{(l+r_0)(l+1)}$, $\beta^{(1)}_{(r_0+1)1}\iota_{(l+1)(l+2)}$, $\ldots$, $\beta^{(1)}_{(r_0+1)r_0}\iota_{(l+r_0)(l+2)}$, $\ldots$, $\beta^{(1)}_{(r_0+1)1}\iota_{(l+1)(2l)}$, $\ldots$, $\beta^{(1)}_{(r_0+1)r_0}\iota_{(l+r_0)(2l)}$ are nonnegative. There are $\beta^{(1)}_{\epsilon_1j_1}\iota_{(l+j_1)(l+1)}$, $\beta^{(1)}_{\epsilon_1j_1}\iota_{(l+j_1)(l+2)}$, $\ldots$, $\beta^{(1)}_{\epsilon_1j_1}\iota_{(l+j_1)(2l)}$ in $\beta^{(1)}_{(r_0+1)1}\iota_{(l+1)(l+1)}$, $\ldots$, $\beta^{(1)}_{(r_0+1)r_0}\iota_{(l+r_0)(l+1)}$, $\beta^{(1)}_{(r_0+1)1}\iota_{(l+1)(l+2)}$, $\ldots$, $\beta^{(1)}_{(r_0+1)r_0}\iota_{(l+r_0)(l+2)}$, $\ldots$, $\beta^{(1)}_{(r_0+1)1}\iota_{(l+1)(2l)}$, $\ldots$, $\beta^{(1)}_{(r_0+1)r_0}\iota_{(l+r_0)(2l)}$, and at least one value of $\beta^{(1)}_{\epsilon_1j_1}\iota_{(l+j_1)(l+1)}$, $\beta^{(1)}_{\epsilon_1j_1}\iota_{(l+j_1)(l+2)}$, $\ldots$, $\beta^{(1)}_{\epsilon_1j_1}\iota_{(l+j_1)(2l)}$ is positive. Therefore, at least one value of $\beta^{(1)}_{(r_0+1)1}\iota_{(l+1)(l+1)}$, $\ldots$, $\beta^{(1)}_{(r_0+1)r_0}\iota_{(l+r_0)(l+1)}$, $\beta^{(1)}_{(r_0+1)1}\iota_{(l+1)(l+2)}$, $\ldots$, $\beta^{(1)}_{(r_0+1)r_0}\iota_{(l+r_0)(l+2)}$, $\ldots$, $\beta^{(1)}_{(r_0+1)1}\iota_{(l+1)(2l)}$, $\ldots$, $\beta^{(1)}_{(r_0+1)r_0}\iota_{(l+r_0)(2l)}$ is positive. Since  $\beta^{(1)}_{(r_0+1)1}\iota_{(l+1)(l+1)}$, $\ldots$, $\beta^{(1)}_{(r_0+1)r_0}\iota_{(l+r_0)(l+1)}$, $\beta^{(1)}_{(r_0+1)1}\iota_{(l+1)(l+2)}$, $\ldots$, $\beta^{(1)}_{(r_0+1)r_0}\iota_{(l+r_0)(l+2)}$, $\ldots$, $\beta^{(1)}_{(r_0+1)1}\iota_{(l+1)(2l)}$, $\ldots,\beta^{(1)}_{(r_0+1)r_0}\iota_{(l+r_0)(2l)}$ is  nonnegative, then at least one value of $\iota_{(l+r_0+1)(2l+1)}$, $\ldots$, $\iota_{(l+r_0+1)(3l)}$ is  positive. Therefore, the $(l+r_0+1)$-th row of matrix $Q_{et}$ is nonzero row. Similarly, the rows $l+r_0+2$ to $l+r_1$ of matrix $Q_{et}$ are all nonzero rows.

\item $\beta^{(1)}_{\epsilon_21},\ldots,\beta^{(1)}_{\epsilon_2 r_0}$  are all nonnegative and at least one value of $\beta^{(1)}_{\epsilon_21},\ldots,\beta^{(1)}_{\epsilon_2 r_0}$  is positive. Suppose $\beta^{(1)}_{\epsilon_2j_2}$ is positive for $j_2\in\{r_0+1,\ldots,r_1\}$,  then  at least one value in $\beta^{(1)}_{\epsilon_2j_2}\beta^{(1)}_{\epsilon_11}$,$\beta^{(1)}_{\epsilon_2j_2}\beta^{(1)}_{\epsilon_12}$,$\ldots$,$\beta^{(1)}_{\epsilon_2j_2}\beta^{(1)}_{\epsilon_1r_0}$ is positive.
By calculating $L^2_{22}$, 
$\beta^{(2)}_{(r_1+1)1}=\beta^{(1)}_{(r_1+1)(r_0+1)}\beta^{(1)}_{(r_0+1)1}+\ldots+\beta^{(1)}_{(r_1+1)r_1}\beta^{(1)}_{r_11}$, 
$\beta^{(2)}_{(r_1+1)2}=\beta^{(1)}_{(r_1+1)(r_0+1)}\beta^{(1)}_{(r_0+1)2}+\ldots+\beta^{(1)}_{(r_1+1)r_1}\beta^{(1)}_{r_12}$, 
$\ldots$, 
$\beta^{(2)}_{(r_1+1)(r_0)}=\beta^{(1)}_{(r_1+1)(r_0+1)}\beta^{(1)}_{(r_0+1)r_0}+\ldots+\beta^{(1)}_{(r_1+1)r_1}\beta^{(1)}_{r_1r_0}$. 
Similar to Step 2, it can be seen that at least one value of $\beta_{\epsilon_21}^{(2)},\beta_{\epsilon_22}^{(2)},\ldots,\beta_{\epsilon_2r_0}^{(2)}$ is positive. Suppose $\beta_{\epsilon_2j_1}^{(2)}$ is positive for $j_1\in\{1,\ldots,r_0\}$, then at least one value in $\beta_{\epsilon_2j_1}^{(2)}\iota_{\tilde{\epsilon}_0(l+1)},\beta_{\epsilon_2j_1}^{(2)}\iota_{\tilde{\epsilon}_0(l+2)},\ldots,\beta_{\epsilon_2j_1}^{(2)}\iota_{\tilde{\epsilon}_0(2l)}$ is positive. Next, we consider $L_{22}^{2}L_{21}$. Similar to Step 2, it can  be  seen  that the rows $l+r_1+1$ to $l+r_2$ of the matrix $Q_{et}$ are all nonzero rows.
\end{enumerate}
The rest can be done in the same manner. In Step $\tau$, the rows  $l+r_{\tau-1}+1$ to $l+r_\tau$ of matrix $Q_{et}$ are all nonzero, where $r_\tau=r$.

In summary, the rows $1$ to $l+r$ of the matrix $Q_{et}$ are all nonzero rows. Elementary column transformation will not convert nonzero rows into zero rows. Therefore, if rows $1$ to $l+r$ of $Q_{et}$ are nonzero rows, so are rows $1$ to $l+r$ of $Q$.
 Let $Q=[q_1,\ldots,q_l,q_{l+1},\ldots,q_{l+r},q_{l+r+1},\ldots,q_{l+r+d}]^T$,
 where $q_i^T$  is  the $i$-th  row  of  the controllability  matrix $Q$, $i=1,2,\ldots,l+r+d$, $l+r+d=n$.  Since rows $1$ to $l+r$ of $Q$ are nonzero rows, $q_\mu^T\neq0$ for $\mu=1,2,\ldots,l+r$. All target nodes are reachable from the leader set,  so $W=[q_{i_1},q_{i_2},\ldots,q_{i_p}]^T$ for  $\{q^T_{i_1},q^T_{i_2},\ldots,q^T_{i_p}\}\subseteq \{q^T_{1},q^T_{2},\ldots,q^T_{l+r}\}$. Therefore, there are no zero rows in $W$. 
\end{proof}

\begin{remark}
Different from the existing results, which only consider part of the situation where there are unreachable nodes. The proof of  Proposition \ref{pro1} applies to all situations. According to the proof of Proposition \ref{pro1}, if there are $s$ unreachable target nodes, then 
$\begin{aligned}
  dim(-L,B,H)=rankW=rank
  \left[ \setlength{\arraycolsep}{1.3pt}{\begin{array}{*{20}{c}}0_{s\times l}&\cdots&0_{s\times l}\\ \star&\cdots&\star\end{array}} \right]
  \leq p-s
\end{aligned}$. Namely, the dimension of the target controllable subspace satisfies $dim(-L,B,H)\leq p-s$. It's important to note that reachable target node is a necessary condition for target controllability, not necessarily a sufficient one. A counter example is as follows.
\begin{figure}[t]
\centerline{\includegraphics[width=100pt,height=60pt]{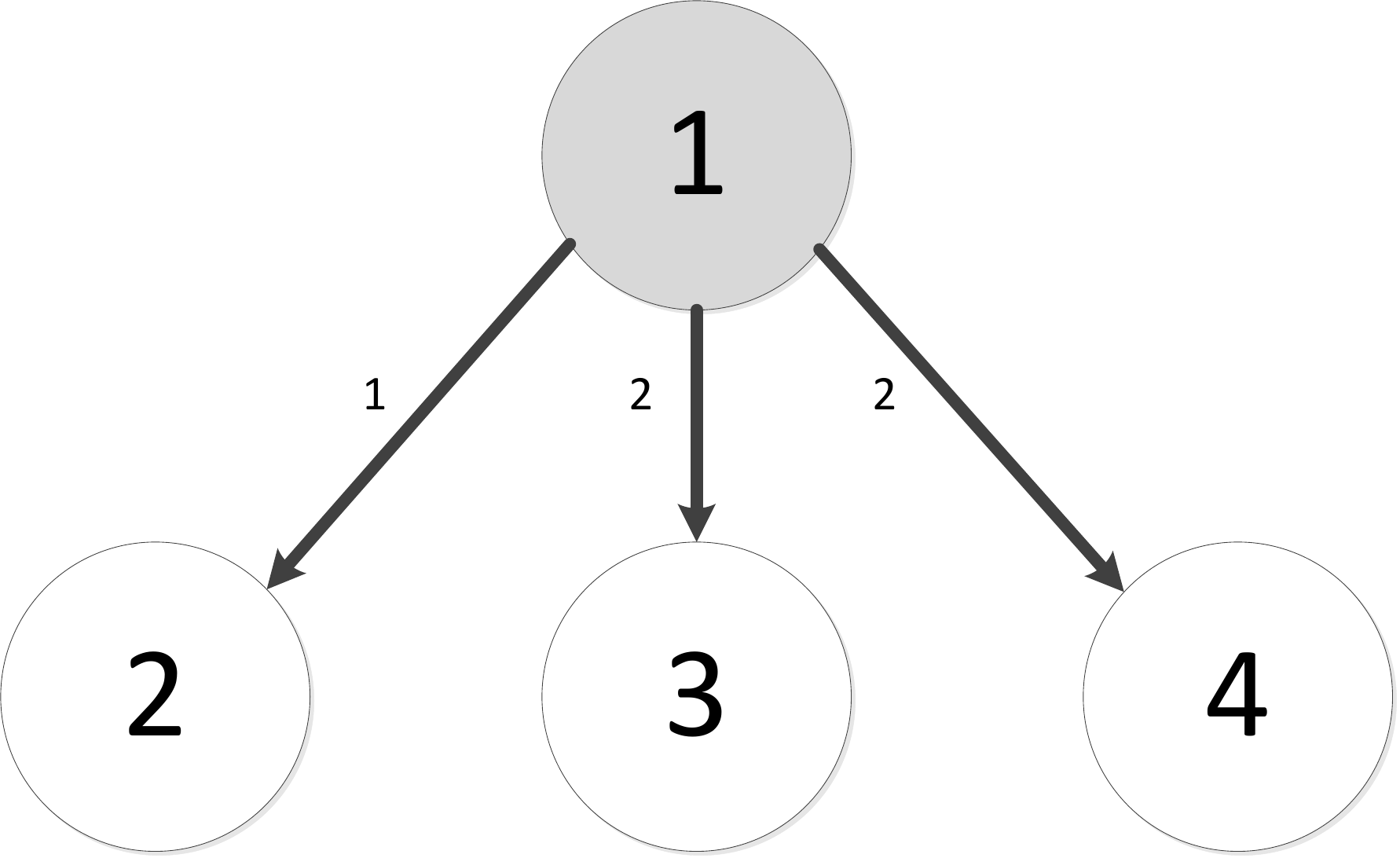}}
\caption{A counter example diagram (1).\label{fig11}}
\end{figure}
In Figure \ref{fig11}, $\mathcal{V}_L=\{1\}$ is the leader node set, other nodes are follower nodes, and $\mathcal{V}_T=\{3,4\}$ is the target node set. Obviously, target nodes are reachable from the leader set. If Proposition \ref{pro1} is a  sufficient condition, then system \eqref{2} is target controllable. 
We see that
\[(-L)= \left[\setlength{\arraycolsep}{2pt} {\begin{array}{*{20}{c}}
0&0&0&0\\
1&-1&0&0\\
2&0&-2&0\\
2&0&0&-2\\
\end{array}} \right],
B= \left[\setlength{\arraycolsep}{2pt} {\begin{array}{*{20}{c}}
1\\
0\\
0\\
0
\end{array}} \right],
H= \left[\setlength{\arraycolsep}{2pt} {\begin{array}{*{20}{c}}
0&0&1&0\\
0&0&0&1
\end{array}} \right].\]

$$\begin{aligned}rankW&=rank[HB,H(-L)B,H(-L)^2B,H(-L)^3B]\\&=rank\left[ {\begin{array}{*{20}{c}}
0&2&-4&8\\
0&2&-4&8
\end{array}} \right]\\&=1<2.\end{aligned}$$
By Lemma \ref{lem1}, the system is not target controllable, which contradicts the target controllability of system \eqref{2}. Therefore, Proposition \ref{pro1} is not a sufficient condition.
\end{remark}
\begin{definition}
A partition of graphs is defined as $\pi_{0{EP}}$, where $\pi_{0{EP}}=\{\{v_1\},\ldots,\{v_l\},\tilde{\mathcal{C}}_1,\ldots,\tilde{\mathcal{C}}_\zeta\}$, with  $v_1,\ldots,v_l$ taking leaders' role.  $\mathcal{C}_{l+1}=\{v_{l+1},\ldots,v_{n}\}$ which is divided by the maximum equitable partition $\pi_{EP}=\{\tilde{\mathcal{C}}_1,\ldots,\tilde{\mathcal{C}}_\zeta\}$ with respect to $\{v_1\},\ldots,\{v_l\}$.
\end{definition}

\begin{theorem}\label{theo1}
System \eqref{2} is target controllable if and only if each cell contains no more than one target node and there are no unreachable target nodes, with $\delta-$reachable nodes belonging to the same cell in $\pi_{0{EP}}$.
\end{theorem}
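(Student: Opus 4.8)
The plan is to use Lemma \ref{lem1} to restate target controllability as a rank condition on $W=HQ$. Since $H$ merely extracts the rows of the controllability matrix $Q$ indexed by the target set $\mathcal{V}_T$, system \eqref{2} is target controllable if and only if the $p$ rows of $Q$ corresponding to the target nodes are linearly independent. The whole argument then reduces to understanding the row structure of $Q$ in terms of the partition $\pi_{0{EP}}$.

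For necessity I would argue as follows. The necessity of ``no unreachable target node'' is immediate from Proposition \ref{pro1}: an unreachable target produces a zero row in $W$, so $\operatorname{rank}(W)<p$. For ``at most one target per cell'' I would establish the key auxiliary fact that \emph{two nodes lying in the same cell of $\pi_{0{EP}}$ have identical rows in $Q$}. This I prove by induction on the power $m$ in $(-L)^mB$: the base case $m=0$ holds because $B$ has nonzero rows only at the singleton leader cells; for the inductive step I write row $a$ of $(-L)^{m+1}B$ as $\sum_{j}(-L)_{aj}\,(\text{row }j\text{ of }(-L)^mB)$, group the sum by cells, and use that by the equitable-partition identity both the cell-wise weight sums $\sum_{v_j\in\mathcal{C}}a_{a v_j}$ and the degree $d_a$ are constant across all $a$ in a fixed cell, so the resulting row depends only on the cell of $a$. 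Consequently, if any cell contained two target nodes, $W$ would have two equal rows and could not be full row rank.

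For sufficiency I assume every cell carries at most one target node and every target is reachable, and I prove the selected target rows of $Q$ are independent. Writing $P$ for the characteristic matrix of $\pi_{0{EP}}$, the equitable-partition relation $(-L)P=P(-L_\pi)$ together with $B=PB_\pi$ (valid because leaders are singleton cells) yields the factorization $Q=PQ_\pi$, where $Q_\pi=[B_\pi,(-L_\pi)B_\pi,\dots]$ is the controllability matrix of the quotient pair $(-L_\pi,B_\pi)$. Hence the target rows of $Q$ are exactly the rows of $Q_\pi$ indexed by the pairwise distinct target cells, and it suffices to show that the rows of $Q_\pi$ belonging to reachable cells are linearly independent. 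Here I exploit the $\delta$-grading: since $\delta$-reachable nodes lie in a common cell of $\pi_{0{EP}}$, the cells are stratified by distance from the leader set, and ordering them by increasing $\delta$ makes $-L_\pi$ block-triangular up to one subdiagonal and $B_\pi$ supported on the $\delta=0$ stratum, exactly as in the block structure of $L_{22}$ and $L_{21}$ used in Proposition \ref{pro1}. Propagating this staircase through $Q_\pi$ shows that a reachable cell first contributes a nonzero block at the column stage dictated by its distance $\delta$, so the reachable-cell rows exhibit a full-rank echelon pattern.

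The main obstacle is precisely this last step: upgrading the nonzero-row conclusion of Proposition \ref{pro1} to a \emph{linear-independence} statement across distinct cells. The delicate case is the independence of the representative rows of two different cells sitting at the same distance level, which the staircase alone does not separate; resolving it requires the maximality of the equitable partition $\pi_{EP}$, so that any nontrivial linear dependence among same-level cell rows would amount to a coarser equitable partition and contradict the choice of $\pi_{EP}$. Making this collapse argument precise, while tracking the echelon blocks produced by the successive powers $(-L_\pi)^{\delta}B_\pi$, is the technical heart of the sufficiency direction.
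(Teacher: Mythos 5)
Your necessity argument is sound and matches the paper's in substance: unreachable targets give zero rows of $W$ via Proposition \ref{pro1}, and two targets in one cell give two equal rows of $Q$, hence $rank(HQ)<p$. Your inductive proof that nodes in a common cell of $\pi_{0{EP}}$ have identical rows of $Q$ (equivalently, the factorization $Q=PQ_\pi$ obtained from $(-L)P=P(-L_\pi)$ and $B=PB_\pi$) is if anything more solid than the paper's, which invokes a permutation matrix $J$ with $JL_{22}J^{T}=L_{22}$ and $JL_{21}=L_{21}$.

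The gap is in sufficiency, and it comes from under-using the hypothesis. The clause ``with $\delta-$reachable nodes belonging to the same cell in $\pi_{0{EP}}$'' is a standing structural assumption meaning that, for each fixed $\delta$, \emph{all} $\delta-$reachable nodes lie in one common cell (see Remark \ref{rem3} and Figure \ref{fig10}). Combined with your own same-cell-same-row fact and the staircase pattern (a $\delta$-level row of $Q_{et}$ vanishes in the block columns $L_{21},L_{22}L_{21},\dots,L_{22}^{\delta-1}L_{21}$ and is nonzero in $L_{22}^{\delta}L_{21}$, by the nonnegativity bookkeeping of Proposition \ref{pro1}), this forces each reachable cell to coincide with exactly one distance level. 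Hence the ``delicate case'' you single out as the technical heart --- two distinct cells sitting at the same distance level --- cannot occur under the theorem's hypothesis, and the echelon pattern with one representative row per level, together with the $I_l$ block for the leaders, already yields the required linear independence of the target rows; no appeal to maximality of $\pi_{EP}$ is needed, and the paper makes none. As written, your proposal leaves the sufficiency direction unfinished at precisely the point where the stated hypothesis was meant to be used, and the repair you sketch (turning a dependence among same-level cell rows into a coarser equitable partition) is neither developed nor the route the paper takes.
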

\begin{proof}According to the proof of Proposition \ref{pro1} and calculation, one has\\
\begin{strip}
 \hrulefill
\begin{align}
Q_{et}= \setlength{\arraycolsep}{0.01pt}{\addtocounter{MaxMatrixCols}{32}
\begin{bmatrix}
 1&\ldots&0&0&\ldots&0&\ldots&0&\ldots&0&0&\ldots&0\\
\vdots&\ddots &\vdots&\vdots&\qquad &\vdots&\qquad &\vdots&\qquad&\vdots\\
0&\ldots&1&0&\ldots&0&\ldots&0&\ldots&0&0&\ldots&0\\
0&\ldots&0&\iota_{(l+1)(l+1)}&\ldots&\iota_{(l+1)(2l)}&\ldots&\divideontimes&\ldots&\divideontimes&\divideontimes&\ldots&\divideontimes\\
\vdots&\qquad &\vdots&\vdots&\ddots &\vdots&\qquad &\vdots&\qquad&\vdots&\vdots&\qquad&\vdots\\
0&\ldots&0&\iota_{(l+r_0)(l+1)}&\ldots&\iota_{(l+r_0)(2l)}&\ldots&\divideontimes&\ldots&\divideontimes&\divideontimes&\ldots&\divideontimes\\
\vdots&\qquad &\vdots&\vdots&\qquad &\vdots&\ddots&\vdots&\qquad&\vdots&\vdots&\qquad&\vdots\\
0&\ldots&0&0&\ldots&0&\ldots&\iota_{(l+r_{\tau-1}+1)(\tau l+l+1)}&\ldots&\iota_{(l+r_{\tau-1}+1)(\tau l+2l)}&\divideontimes&\ldots&\divideontimes\\
\vdots&\qquad &\vdots&\vdots&\qquad &\vdots&\qquad &\vdots&\ddots&\vdots&\vdots&\qquad&\vdots\\
0&\ldots&0&0&\ldots&0&\ldots&\iota_{(l+r_{\tau})(\tau l+l+1)}&\ldots&\iota_{(l+r_{\tau})(\tau l+2l)}&\divideontimes&\ldots&\divideontimes\\
0&\ldots&0&0&\ldots&0&\ldots&0&\ldots&0&0&\ldots&0\\
\vdots&\qquad &\vdots&\vdots&\qquad &\vdots&\qquad &\vdots&\qquad&\vdots&\vdots&\qquad&\vdots\\
0&\ldots&0&0&\ldots&0&\ldots&0&\ldots&0&0&\ldots&0\\
\end{bmatrix}}.\nonumber
\end{align}
\hrulefill
\end{strip}
\noindent Let $Q_{et}=[\tilde{q}_1,\ldots,\tilde{q}_l,\tilde{q}_{l+1},\ldots,\tilde{q}_{l+r_0},\tilde{q}_{l+r_0+1},\ldots,\tilde{q}_{l+r_1},$\\$\ldots,\tilde{q}_{l+r_{\tau-1}+1},\ldots, \tilde{q}_{l+r_\tau},\tilde{q}_{l+r_\tau+1},\ldots,\tilde{q}_{l+r_\tau+d}]^T$, where $r_\tau=r$. 
Set
$$\begin{aligned}
\tilde{\epsilon}_\delta=
\begin{cases}
l+1,\ldots,l+r_0, &\delta=0\cr
l+r_{\delta-1}+1,\ldots,l+r_\delta, &\delta=1,\ldots,\tau.\cr \end{cases}
\end{aligned}
$$
We see that rows $\tilde{q}_1,\ldots,\tilde{q}_l,\tilde{q}_{\tilde{\epsilon}_0},\tilde{q}_{\tilde{\epsilon}_1},\ldots,q_{\tilde{\epsilon}_\tau}$ of matrix $Q_{et}$ are linearly independent. So the rows $q_1,\ldots,q_l,q_{\tilde{\epsilon}_0},q_{\tilde{\epsilon}_1},\ldots,q_{\tilde{\epsilon}_\tau}$ of the matrix $Q$ are linearly independent.

Suppose nodes $i$ and $j$ are in the same cell, with $\delta-$reachable nodes belonging to the same cell in $\pi_{0{EP}}$. There is a permutation matrix $J$ which can exchange the $(i-l)$-th and $(j-l)$-th rows of matrix by left multipying the matrix $J$, and $J$ satisfies $JL_{22}J^T=L_{22}$, $JL_{21}=L_{21}$. Hence
$$\begin{aligned}&\quad\,\,\,J[L_{21},L_{22}L_{21},L_{22}^2L_{21},\ldots,L_{22}^ {n-2}L_{21}]\\&=
[JL_{21},JL_{22}J^TJL_{21},\ldots,(JL_{22}J^{T})^{ n-2}JL_{21}]\\&=
[L_{21},L_{22}L_{21},L_{22}^2L_{21},\ldots,L_{22}^{n-2}L_{21}].\end{aligned}$$
Namely, the $(i-l)$-th and $(j-l)$-th rows of $[L_{21},L_{22}L_{21},L_{22}^2L_{21},\ldots,L_{22}^{n-2}L_{21}]$ are the same.
$$Q_{et}=\left[ {\begin{array}{*{20}{c}}
 I_l&0&0&\ldots&0\\
 0&L_{21}&L_{22}L_{21}&\ldots&L_{22}^{n-2}L_{21}\\
 0&0&0&\ldots&0\\
\end{array}} \right].$$
Consequently,  $\tilde{q}_i=\tilde{q}_j$. Then $q_i=q_j$. If nodes $i$ and $j$ are both $\delta-$reachable nodes, then $\tilde{q}_i=\tilde{q}_j\neq0$, $q_i=q_j\neq0$.

(Necessity). Assume that there is one cell containing no less than two target nodes, with $\delta-$reachable nodes belonging to the same cell in $\pi_{0{EP}}$.  Namely, target nodes $\bar{v}_i$ and $\bar{v}_j$ are in the same cell and $\delta-$reachable nodes belong to the same cell in $\pi_{0{EP}}$. Thus, $q_{\bar{v}_i}=q_{\bar{v}_j}$. Then, $rankW=rank(HQ)<p$, and accordingly system \eqref{2} is not target controllable. Assume that there are unreachable target nodes, by Proposition \ref{pro1}, system \eqref{2} is not target controllable.

(Sufficiency).  Assume that each cell contains no more than one target node and there are no unreachable target nodes, with $\delta-$reachable nodes belonging to the same cell in $\pi_{0{EP}}$. Since the rows $q_1,\ldots,q_l,q_{\tilde{\epsilon}_0},q_{\tilde{\epsilon}_1},\ldots,q_{\tilde{\epsilon}_\tau}$ of matrix $Q$  are linearly independent,  $rankW=rank(HQ)=p$. Thus, system \eqref{2} is target controllable.
\end{proof}

\begin{remark}\label{rem3}

\begin{figure}[t]
\centerline{\includegraphics[width=110pt,height=150pt]{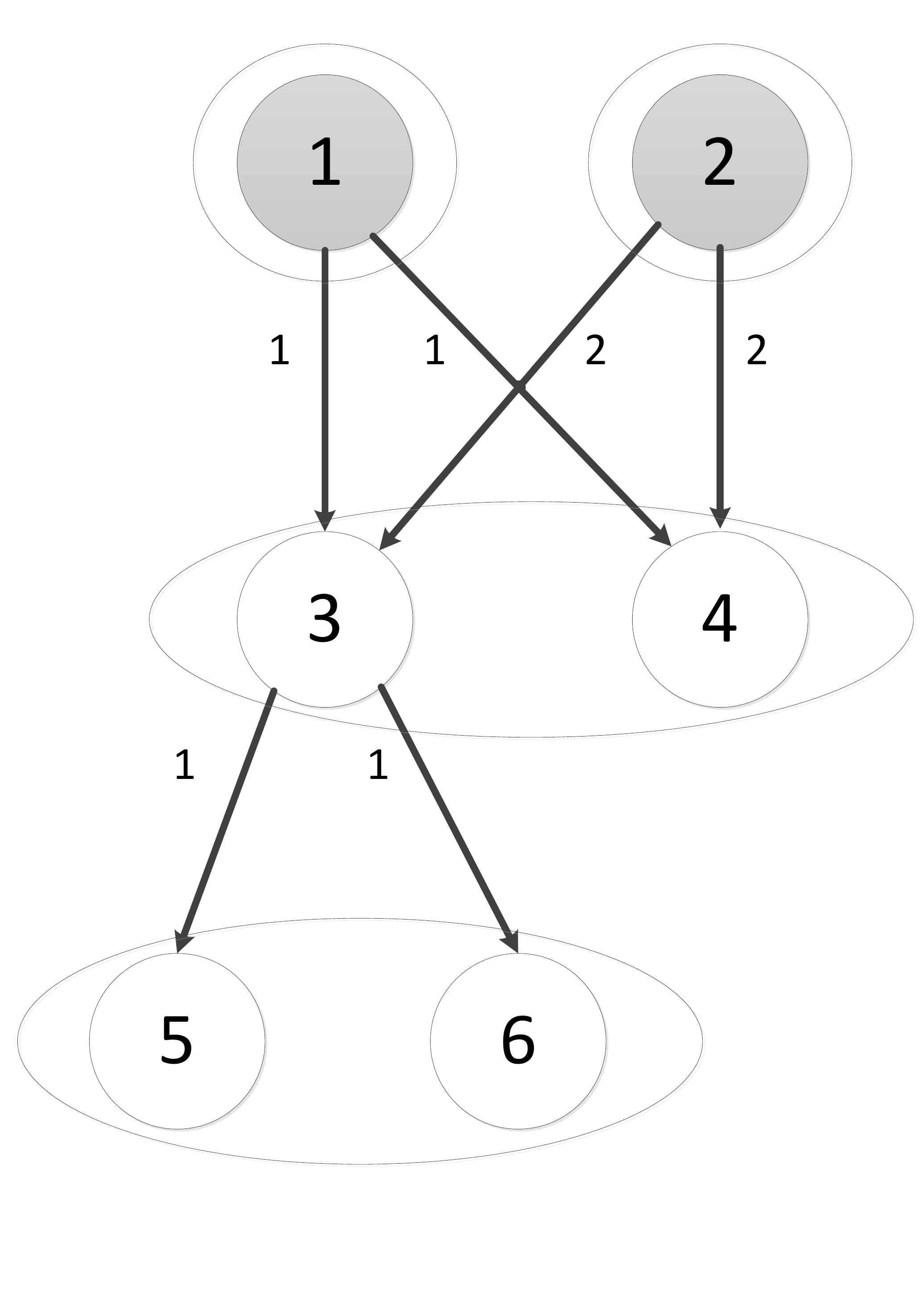}}
\caption{A directed weighted topology graph with six nodes.\label{fig10}}
\end{figure}

In Figure \ref{fig10},  a partition $\pi_{0{EP}}=\{\{1\},\{2\},\{3,4\},\{5,6\}\}$ is constructed. Nodes $1$ and $2$ are divided into different cells, and are selected as leaders. Let  $\mathcal{C}_{3}=\{3,4,5,6\}$, and the nodes of $\mathcal{C}_{3}$ be divided by taking advantage of the maximum equitable partition $\pi_{EP}=\{\tilde{\mathcal{C}}_1,\tilde{\mathcal{C}}_2\}$ with respect to $\{1\},\{2\}$.  Then,  $\tilde{\mathcal{C}}_1=\{3,4\}$, $\tilde{\mathcal{C}}_2=\{5,6\}$. Finally, we get the partition $\pi_{0{EP}}=\{\{1\},\{2\},\{3,4\},\{5,6\}\}$. Nodes $3$ and $4$ are $0-$reachable nodes, and nodes $5$ and $6$ are $1-$reachable nodes. Figure \ref{fig10} shows that $\delta-$reachable nodes belong to the same cell in $\pi_{0{EP}}$.
\end{remark}

\begin{corollary}\label{cor1}
If target nodes are chosen from $v_1$, $\ldots$, $v_l$, $v_{\tilde{\epsilon}_0}$, $\ldots$, $v_{\tilde{\epsilon}_\tau}$, then system \eqref{2} is target controllable, where $v_1$, $\ldots$, $v_l$ are leaders, $v_{\tilde{\epsilon}_{\delta}}$ is a $\delta-$reachable node, and $\delta=0,1,2,\ldots,\tau$.
\end{corollary}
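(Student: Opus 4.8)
The plan is to read the conclusion off directly from the linear-independence structure already established in the proof of Theorem \ref{theo1}, so that no new computation is required. Recall that there, from the staircase form of the reduced matrix $Q_{et}$, it was shown that the rows $q_1^T,\ldots,q_l^T,q_{\tilde{\epsilon}_0}^T,q_{\tilde{\epsilon}_1}^T,\ldots,q_{\tilde{\epsilon}_\tau}^T$ of the controllability matrix $Q$ are linearly independent, where $q_i^T$ denotes the $i$-th row of $Q$, the indices $1,\ldots,l$ correspond to the leaders, and $v_{\tilde{\epsilon}_\delta}$ is a $\delta$-reachable node for each $\delta=0,1,\ldots,\tau$. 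The reason for the independence is that the first nonzero block of each successive $\delta$-class occupies a block column in which all previously listed rows vanish, so none of these rows lies in the span of the others.

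First I would fix a target set $\mathcal{V}_T=\{\bar{v}_1,\ldots,\bar{v}_p\}$ satisfying the hypothesis, meaning that every $\bar{v}_k$ lies in the candidate family $\{v_1,\ldots,v_l,v_{\tilde{\epsilon}_0},\ldots,v_{\tilde{\epsilon}_\tau}\}$. By Lemma \ref{lem1} the target controllability matrix factors as $W=HQ$, and since $H=[e_{\bar{v}_1},\ldots,e_{\bar{v}_p}]^T$ is a row-selection matrix, $W$ is precisely the submatrix of $Q$ formed by the rows $q_{\bar{v}_1}^T,\ldots,q_{\bar{v}_p}^T$.

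Next I would observe that these $p$ rows constitute a subfamily of the linearly independent set $\{q_1^T,\ldots,q_l^T,q_{\tilde{\epsilon}_0}^T,\ldots,q_{\tilde{\epsilon}_\tau}^T\}$. Since any subfamily of a linearly independent set is again linearly independent, the $p$ selected rows are linearly independent, whence $rank(W)=p$; that is, $W$ is of full row rank. Lemma \ref{lem1} then yields the target controllability of system \eqref{2}.

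There is no genuine obstacle in this argument, because all of the analytic content — producing the staircase form of $Q_{et}$ and deducing the independence of the candidate rows — has already been discharged in Theorem \ref{theo1}. The only point needing a word of care is to confirm that $H$ acts purely as a selector of rows of $Q$ and that distinct target nodes correspond to distinct members of the candidate family; both follow at once from the definition of $H$ and from the fact that the candidate rows are pairwise distinct, so that restricting to the chosen targets cannot destroy linear independence.
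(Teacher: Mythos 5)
Your proposal is correct and is exactly the derivation the paper intends: its own proof of Corollary \ref{cor1} consists of the single line ``it can be derived directly from the proof of Theorem \ref{theo1},'' and what you have written out — the linear independence of the rows $q_1^T,\ldots,q_l^T,q_{\tilde{\epsilon}_0}^T,\ldots,q_{\tilde{\epsilon}_\tau}^T$ established there, the fact that $W=HQ$ is a row selection, and the observation that a subfamily of a linearly independent set is linearly independent so $rank(W)=p$ — is precisely that derivation made explicit. No difference in approach; your version is simply more detailed than the paper's one-sentence proof.
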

\begin{proof}
It can be derived directly from the proof of Theorem \ref{theo1}.
\end{proof}

\begin{proposition}\label{pro2}
If system \eqref{2} is target controllable, then the following two statements are true.
\begin{enumerate}
\item $rankH[\lambda_iI-(-L),B]=p$,
where $\lambda_i$  is an eigenvalue of $-L$,  $i=1,2,\ldots,n$, and $p$ is the number of target nodes. 
\item there is no vector $q\in \mathbb{R}^p$ such that the nonzero left eigenvector $\vartheta^T$ of $-L$ satisfies $\vartheta^T=q^TH$ and  $\vartheta^TB=0$.
\end{enumerate}
\end{proposition}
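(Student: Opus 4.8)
The plan is to prove both statements by contradiction, using Lemma \ref{lem1} as the only engine: system \eqref{2} is target controllable if and only if $W=[HB,H(-L)B,\ldots,H(-L)^{n-1}B]$ has full row rank $p$. First I would record one structural fact about $H$. Since $H=[e_{\bar{v}_1},\ldots,e_{\bar{v}_p}]^T$ is a row-selection matrix whose rows are distinct standard basis vectors, we have $HH^T=I_p$; consequently $q^TH=0$ forces $q^T=q^THH^T=0$, so for any nonzero $q\in\mathbb{R}^p$ the row vector $q^TH$ is necessarily nonzero. This small observation is what links the two statements.

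For statement 2, I would suppose to the contrary that there is a nonzero left eigenvector $\vartheta^T$ of $-L$, say $\vartheta^T(-L)=\lambda\vartheta^T$ with $\lambda$ one of the $\lambda_i$, such that $\vartheta^T=q^TH$ and $\vartheta^TB=0$. Because $\vartheta^T\neq 0$, the vector $q$ is nonzero. Left-multiplying $W$ by $q^T$ and using $q^TH(-L)^k=\vartheta^T(-L)^k=\lambda^k\vartheta^T$ for every $k$, each block satisfies $q^TH(-L)^kB=\lambda^k\vartheta^TB=0$, hence $q^TW=0$ with $q\neq 0$. Then $W$ cannot have full row rank, and by Lemma \ref{lem1} the system is not target controllable, contradicting the hypothesis. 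This proves statement 2.

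For statement 1, I would show it is exactly the eigenvalue-wise translation of statement 2. If $\lambda$ is not an eigenvalue of $-L$, then $\lambda I-(-L)$ is invertible, so $H[\lambda I-(-L)]$ already has rank $p$ and the condition is automatic; this is why only eigenvalues need to be checked. If instead $\lambda_i$ is an eigenvalue and $rank\,H[\lambda_iI-(-L),B]<p$, there is a nonzero $q\in\mathbb{R}^p$ with $q^TH[\lambda_iI-(-L),B]=0$, which splits into $q^TH(-L)=\lambda_i q^TH$ and $q^THB=0$. Writing $\vartheta^T=q^TH$, which is nonzero by the $HH^T=I_p$ fact above, these equalities say precisely that $\vartheta^T$ is a nonzero left eigenvector of $-L$ with $\vartheta^TB=0$, i.e. the forbidden configuration of statement 2; by the argument just given, the system would then fail to be target controllable. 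Hence target controllability forces $rank\,H[\lambda_iI-(-L),B]=p$ at every eigenvalue.

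The computations here are entirely routine; the only places requiring care are the bookkeeping that $q\neq 0$ is equivalent to $\vartheta^T=q^TH\neq 0$, which is exactly where the structure of $H$ as a selection matrix is used, and the propagation $\vartheta^T(-L)^k=\lambda^k\vartheta^T$ through the successive blocks of $W$. I expect the main, and rather modest, obstacle to be phrasing the eigenvalue–eigenvector correspondence cleanly enough that the identification between the rank drop in statement 1 and the forbidden left eigenvector in statement 2 is transparent, rather than any genuine analytic difficulty.
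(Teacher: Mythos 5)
Your proposal is correct and follows essentially the same route as the paper: both arguments are PBH-style contradictions that produce a nonzero $q$ with $q^TH(-L)^kB=\lambda^kq^THB=0$ for all $k$, hence $q^TW=0$ and $rank\,W<p$, contradicting Lemma \ref{lem1}. The only cosmetic differences are that you prove statement 2 first and reduce statement 1 to it (the paper does the reverse), and you make explicit the useful bookkeeping fact $HH^T=I_p$ guaranteeing $q\neq 0\iff q^TH\neq 0$, which the paper leaves implicit.
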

\begin{proof}
\begin{enumerate}
\item   Assume that  $rankH[\lambda_iI-(-L),B]<p$. Then there is a nonzero vector $q$ such that $q^TH[\lambda_iI-(-L),B]=0$. 
As a consequence, $q^TH(-L)=\lambda_iq^TH$, $q^THB=0$. Therefore, $q^THB=0,\quad q^TH(-L)B=\lambda_iq^THB=0$, $\ldots$, $q^TH(-L)^{n-1}B=0$. 
Thus, $q^T[HB,H(-L)B,\cdots,H(-L)^{n-1}B]=q^TW=0$. Since $q$ is a nonzero vector, $rankW<p$. By Lemma \ref{lem1}, the system is not target controllable, which is a contradiction.
\item Suppose that there is a $q\in \mathbb{R}^p$, such that the nonzero left eigenvector $\vartheta^T$ of $-L$ satisfies $\vartheta^T=q^TH$ and  $\vartheta^TB=0$. Then, following the same arguments as the proof of the first part, one has  $rankW<p$.  By Lemma \ref{lem1}, the system is not target controllable, which contradicts the target controllability of system \eqref{2}.
\end{enumerate}
\end{proof}

Proposition \ref{pro2} is only a necessary condition for target controllability, which is illustrated by the following example.

\begin{figure}[t]
\centerline{\includegraphics[width=150pt,height=25pt]{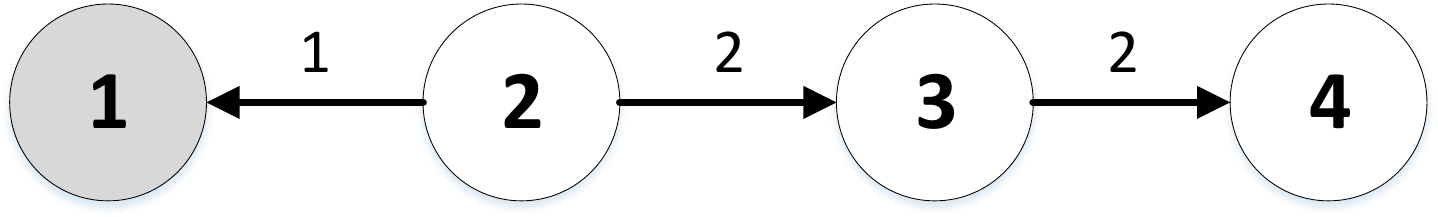}}
\caption{A counter example diagram (2).\label{fig7}}
\end{figure}

In Figure \ref{fig7}, $\mathcal{V}_L=\{1\}$ is the leader node set, and other nodes are follower nodes. $\mathcal{V}_T=\{3,4\}$ is the target node set.
Then
\[(-L)= \left[ \setlength{\arraycolsep}{2pt}{\begin{array}{*{20}{c}}
-1&1&0&0\\0&0&0&0\\
0&2&-2&0\\
0&0&2&-2\\
\end{array}} \right],
B= \left[\setlength{\arraycolsep}{2pt} {\begin{array}{*{20}{c}}
1\\
0\\
0\\0
\end{array}} \right]
, H= \left[ \setlength{\arraycolsep}{2pt}{\begin{array}{*{20}{c}}
0&0&1&0\\
0&0&0&1
\end{array}} \right],\]
$\det(\lambda I+L)=0$ for $\lambda_1=\lambda_2=-2,\lambda_3=-1,\lambda_4=0.$
\[\begin{aligned}
H[(-2)I-(-L),B]&=\left[ {\begin{array}{*{20}{c}}
0&-2&0&0&0\\
0&0&-2&0&0
\end{array}} \right];\\
H[(-1)I-(-L),B]&=\left[ {\begin{array}{*{20}{c}}
0&-2&1&0&0\\
0&0&-2&1&0
\end{array}} \right];\\
H[0I-(-L),B]&=\left[ {\begin{array}{*{20}{c}}
0&-2&2&0&0\\
0&0&-2&2&0
\end{array}} \right].
\end{aligned}\]
So
$$rank H[\lambda_iI-(-L),B]=2.\qquad i=1,2,3,4$$
In addition, there is no vector $q\in \mathbb{R}^p$ such that the nonzero left eigenvector $\vartheta^T$ of $-L$ satisfies $\vartheta^T=q^TH$ and  $\vartheta^TB=0$.

If the condition in Proposition \ref{pro2} is a sufficient one, then  system \eqref{2} is target controllable.
However, by Proposition  \ref{pro1}, if the target node is unreachable from the leader set,  system \eqref{2} is not target controllable. Obviously,  target nodes presented in Figure \ref{fig7} are unreachable from the leader set. So the system is not target controllable.

\begin{corollary}\label{cor3}
If there is a left eigenvector  $\vartheta^T$ of $-L$ such that $\vartheta^TB=0$, and the elements of $\vartheta^T$ corresponding to nontarget nodes are zero, then system \eqref{2}  is not target controllable.
\end{corollary}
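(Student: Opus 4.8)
The plan is to reduce the statement to the second part of Proposition \ref{pro2}, whose contrapositive already says that the existence of a suitable vector $q$ forces the system to fail target controllability. The crux is to recognize that the hypothesis ``the entries of $\vartheta^T$ at nontarget nodes vanish'' is exactly a reformulation of ``$\vartheta^T=q^TH$ for some $q\in\mathbb{R}^p$'', by virtue of the structure of the output matrix $H=[e_{\bar{v}_1},\ldots,e_{\bar{v}_p}]^T$.

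First I would make this structural observation precise. Since each row of $H$ is the standard basis covector $e_{\bar{v}_k}^T$ picking out the coordinate of target node $\bar{v}_k$, every vector of the form $q^TH=\sum_{k=1}^{p}q_k e_{\bar{v}_k}^T$ carries $q_k$ in the position of $\bar{v}_k$ and $0$ in every nontarget position. Conversely, given $\vartheta^T$ that vanishes on all nontarget coordinates, I would define $q\in\mathbb{R}^p$ by setting $q_k$ equal to the $\bar{v}_k$-th entry of $\vartheta^T$; then $q^TH=\vartheta^T$. Because a left eigenvector is nonzero by definition, $\vartheta^T\neq0$, and hence $q\neq0$.

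Next I would argue directly through the target controllability matrix, mirroring the computation in Proposition \ref{pro2}. Writing $\lambda$ for the eigenvalue associated with $\vartheta^T$, so that $\vartheta^T(-L)=\lambda\vartheta^T$, and using $\vartheta^T=q^TH$ together with the hypothesis $\vartheta^TB=0$, I obtain $q^TH(-L)^kB=\vartheta^T(-L)^kB=\lambda^k\vartheta^TB=0$ for every $k\geq0$. Collecting these identities for $k=0,1,\ldots,n-1$ gives $q^TW=q^T[HB,H(-L)B,\ldots,H(-L)^{n-1}B]=0$ with $q\neq0$, so the $p$ rows of $W$ are linearly dependent and $rank(W)<p$. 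By Lemma \ref{lem1}, the system is not target controllable.

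There is no serious obstacle here, since the corollary is essentially a repackaging of Proposition \ref{pro2}(2). The only point that needs attention is the equivalence between the support condition on $\vartheta^T$ and the algebraic form $\vartheta^T=q^TH$, together with verifying that the constructed $q$ is genuinely nonzero; both follow immediately from the shape of $H$ and the fact that an eigenvector cannot be the zero vector. The remaining telescoping of powers of $(-L)$ is identical to what was already carried out earlier.
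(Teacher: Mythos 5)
Your proposal is correct and follows essentially the same route as the paper: both reduce the corollary to Proposition \ref{pro2}(2) by showing that the vanishing of $\vartheta^T$ on nontarget coordinates is equivalent to $\vartheta^T=q^TH$ for some $q\in\mathbb{R}^p$, using the explicit form $H=[e_{\bar{v}_1},\ldots,e_{\bar{v}_p}]^T$, and then conclude $q^TW=0$ so that $\mathrm{rank}(W)<p$. Your added remark that $q\neq0$ because an eigenvector is nonzero is a small point the paper leaves implicit, but otherwise the arguments coincide.
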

\begin{proof}
Let the labels of target nodes be ${1},\ldots,{p}$, then $H=[e_{1},\cdots,e_{p}]^T$. From the item 2) of Proposition \ref{pro2},  if there is a vector $q\in \mathbb{R}^p$ such that the nonzero left eigenvector $\vartheta^T$ of $-L$ satisfies $\vartheta^T=q^TH$ and  $\vartheta^TB=0$, then the system is not target controllable. Next, we prove that the following two conditions are equivalent: 
\begin{enumerate}
\item the elements of $\vartheta^T$ corresponding to nontarget nodes are zero,
\item there is a vector $q\in \mathbb{R}^p$ such that $\vartheta^T$ satisfies $\vartheta^T=q^TH$.
\end{enumerate}
Assume that there is a vector $q\in \mathbb{R}^p$ such that $\vartheta^T$ satisfies $\vartheta^T=q^TH$,  one has
$$\begin{aligned}\vartheta^T&=q^TH\\&=[\bar{q}_1,\cdots,\bar{q}_p][e_{1},\cdots,e_{p}]^T\\&=\bar{q}_1e_{1}^T+\cdots+\bar{q}_pe_{p}^T\\&=[\bar{q}_1,\cdots,\bar{q}_p,0_{1\times (n-p)}].\end{aligned}$$
It can be seen that the elements of $\vartheta^T$ corresponding to the nontarget nodes are zero.
Assume that the elements of $\vartheta^T$ corresponding to the nontarget nodes are zero,  then $$\begin{aligned}\vartheta^T&=[\bar{q}_1,\cdots,\bar{q}_p,0_{1\times (n-p)}]\\&=\bar{q}_1e_{1}^T+\cdots+\bar{q}_pe_{p}^T\\&=[\bar{q}_1,\cdots,\bar{q}_p][e_{1},\cdots,e_{p}]^T\\&=[\bar{q}_1,\cdots,\bar{q}_p]H.\end{aligned}$$ Therefore, there is a vector $q^T=[\bar{q}_1,\cdots,\bar{q}_p]$ such that $\vartheta^T$ satisfies $\vartheta^T=q^TH$.
\end{proof}

Assume that system \eqref{2} is  incompletely controllable, it can be decomposed according to controllability \cite{ref26}.  There is a nonsingular transformation $\hat x=Px$, such that
\[\widehat{-L}=P(-L)P^{-1}=\left[\begin{array}{cc}\hat{L}_{c} & \hat{L}_{12} \\
0 & \hat{L}_{\bar{c}}
\end{array}\right],\]
\[\hat{B}=PB=\left[\begin{array}{c}
\hat{B}_{c} \\0
\end{array}\right],\hat{H}=HP^{-1}=[\hat{H}_{c},\hat{H}_{\bar{c}}],\]
where $\hat x_c\in\mathbb{R}^\kappa$ and $\hat x_{\bar c}\in\mathbb{R}^{n-\kappa}$ represent controllable state vector and uncontrollable state vector, respectively.
Let $P^{-1}=[\mathcal{P}_1,\mathcal{P}_2]$, where $\mathcal{P}_1\in {\mathbb{R}^{n\times \kappa}}$ and $\mathcal{P}_2\in {\mathbb{R}^{n\times (n-\kappa)}}$. ${p_{\gamma_1},p_{\gamma_2},\ldots,p_{\gamma_\kappa}}$ constitute the maximal linear independent group of rows in $\mathcal{P}_1$, where $p_{\gamma_1},p_{\gamma_2},\ldots,p_{\gamma_\kappa}$ are the $\gamma_1$-th, $\gamma_2$-th, $\ldots$, $\gamma_\kappa$-th  rows of the matrix $\mathcal{P}_1$, respectively. At the same time, $\gamma_1,\gamma_2,\ldots,\gamma_\kappa$ are the labels of the maximal linear independent rows $p_{\gamma_1},p_{\gamma_2},\ldots,p_{\gamma_\kappa}$.

\begin{lemma}\label{lem2}
System \eqref{2} is target controllable if and only if $rank\hat{H}_{c}=p$.
\end{lemma}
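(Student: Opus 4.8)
The plan is to reduce target controllability, via Lemma \ref{lem1}, to the rank condition $rank(W)=p$ on the target controllability matrix $W=[HB,H(-L)B,\ldots,H(-L)^{n-1}B]$, and then to show that this rank equals $rank\hat{H}_c$. First I would exploit the invariance of the target controllability matrix under the similarity transformation $\hat{x}=Px$. Since $H=\hat{H}P$, $-L=P^{-1}(\widehat{-L})P$, and $B=P^{-1}\hat{B}$, each block satisfies $H(-L)^kB=\hat{H}(\widehat{-L})^k\hat{B}$, so that $W=[\hat{H}\hat{B},\hat{H}(\widehat{-L})\hat{B},\ldots,\hat{H}(\widehat{-L})^{n-1}\hat{B}]$ and the rank of $W$ is unchanged by the transformation.

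Next I would use the block-triangular structure of the decomposition. Because $\widehat{-L}$ is block upper triangular, its powers keep the form $(\widehat{-L})^k=\begin{bmatrix}\hat{L}_c^k & * \\ 0 & \hat{L}_{\bar{c}}^k\end{bmatrix}$; multiplying on the right by $\hat{B}=\begin{bmatrix}\hat{B}_c\\0\end{bmatrix}$ annihilates the lower block row, giving $(\widehat{-L})^k\hat{B}=\begin{bmatrix}\hat{L}_c^k\hat{B}_c\\0\end{bmatrix}$, and left-multiplying by $\hat{H}=[\hat{H}_c,\hat{H}_{\bar{c}}]$ then yields $\hat{H}(\widehat{-L})^k\hat{B}=\hat{H}_c\hat{L}_c^k\hat{B}_c$. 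Collecting the blocks gives the factorization $W=\hat{H}_c Q_c$, where $Q_c=[\hat{B}_c,\hat{L}_c\hat{B}_c,\ldots,\hat{L}_c^{n-1}\hat{B}_c]$ is the controllability matrix of the controllable subsystem $(\hat{L}_c,\hat{B}_c)$.

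Finally, since $(\hat{L}_c,\hat{B}_c)$ is controllable by construction of the decomposition, $Q_c$ has full row rank $\kappa$: by Cayley--Hamilton applied to the $\kappa\times\kappa$ matrix $\hat{L}_c$ only the first $\kappa$ blocks matter, and controllability makes their columns span $\mathbb{R}^\kappa$. A matrix of full row rank admits a right inverse $R$ with $Q_c R=I_\kappa$, so from $\hat{H}_c=(\hat{H}_c Q_c)R$ we get $rank\hat{H}_c\le rank(\hat{H}_c Q_c)$, while $rank(\hat{H}_c Q_c)\le rank\hat{H}_c$ holds trivially; hence $rank(W)=rank(\hat{H}_c Q_c)=rank\hat{H}_c$. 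Combining with Lemma \ref{lem1}, the system is target controllable iff $rank\hat{H}_c=p$. (When $p>\kappa$ this forces non-controllability automatically, since $rank\hat{H}_c\le\kappa<p$, which is consistent.)

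I do not expect a serious obstacle here; the argument is a clean chain of a transformation-invariance step, a block multiplication, and a rank identity. The one point demanding care is the last step, namely establishing $rank(\hat{H}_c Q_c)=rank\hat{H}_c$ from the full-row-rank property of $Q_c$, and being explicit that this property follows from the controllability of the controllable subsystem together with Cayley--Hamilton; once the factorization $W=\hat{H}_c Q_c$ is in hand, everything else is routine.
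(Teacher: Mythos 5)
Your proposal is correct and follows essentially the same route as the paper: both reduce to $rank(W)=p$ via Lemma \ref{lem1}, insert $P^{-1}P$ to obtain the factorization $W=\hat{H}_c[\hat{B}_c,\hat{L}_c\hat{B}_c,\ldots,\hat{L}_c^{n-1}\hat{B}_c]$, and use the full row rank of the controllable subsystem's controllability matrix to conclude $rank(W)=rank\hat{H}_c$. Your explicit right-inverse and Cayley--Hamilton justifications merely spell out steps the paper asserts without detail.
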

\begin{proof}
(Sufficiency). Suppose $rank\hat{H}_{c}=p$. The target controllability matrix is
$$\begin{aligned}W&=H[B,(-L)B,(-L)^2B,\ldots,(-L)^{n-1}B]\\&=HP^{-1}[PB,P(-L)P^{-1}PB,\ldots,(P(-L)P^{-1})^{n-1}PB]\\&=[
\hat{H}_{c},\hat{H}_{\bar{c}}]\left[ {\begin{array}{*{20}{c}}
\hat{B}_{c}&\hat{L}_{c}\hat{B}_{c}&\hat{L}^2_{c}\hat{B}_{c}&\ldots&\hat{L}^{n-1}_{c}\hat{B}_{c}\\
0&0&0&\ldots&0
\end{array}} \right]\\&=\left[ {\begin{array}{*{20}{c}}
\hat{H}_{c}\hat{B}_{c}&\hat{H}_{c}L_{c}B_{c}&\hat{H}_{c}L^2_{c}B_{c}&\ldots&\hat{H}_{c}L^{n-1}_{c}B_{c}
\end{array}} \right]\\&=\hat{H}_{c}\left[ {\begin{array}{*{20}{c}}
\hat{B}_{c}&\hat{L}_{c}\hat{B}_{c}&\hat{L}^2_{c}\hat{B}_{c}&\ldots&\hat{L}^{n-1}_{c}\hat{B}_{c}
\end{array}} \right].
\end{aligned}$$
 It is known that $\hat{L}_{c}$ and $\hat{B}_{c}$ are the system matrix and input matrix corresponding to the controllable substates, respectively. So $\left[ {\begin{array}{*{20}{c}}
\hat{B}_{c}&\hat{L}_{c}\hat{B}_{c}&\hat{L}^2_{c}\hat{B}_{c}&\ldots&\hat{L}^{n-1}_{c}\hat{B}_{c}
\end{array}} \right]$ is full row rank. Then, $rank \hat{H}_{c}\left[ {\begin{array}{*{20}{c}}
\hat{B}_{c}&\hat{L}_{c}\hat{B}_{c}&\hat{L}^2_{c}\hat{B}_{c}&\ldots&\hat{L}^{n-1}_{c}\hat{B}_{c}
\end{array}} \right]=rank \hat{H}_{c}=p$. Namely, $rankW=p$. Thus, system \eqref{2} is target controllable.

(Necessity).
Assume that  system \eqref{2} is target controllable. Then, by Lemma \ref{lem1}, $rankW=p$. Namely, $$rank \left(\hat{H}_{c}\left[ {\begin{array}{*{20}{c}}
\hat{B}_{c}&\hat{L}_{c}\hat{B}_{c}&\hat{L}^2_{c}\hat{B}_{c}&\ldots&\hat{L}^{n-1}_{c}\hat{B}_{c}
\end{array}} \right]\right)=p.$$ Since $\left[ {\begin{array}{*{20}{c}}
\hat{B}_{c}&\hat{L}_{c}\hat{B}_{c}&\hat{L}^2_{c}\hat{B}_{c}&\ldots&\hat{L}^{n-1}_{c}\hat{B}_{c}
\end{array}} \right]$ is full row rank, $rank\hat{H}_{c}=p$.
\end{proof}

\begin{theorem}\label{theo2}
System \eqref{2} is target controllable if and only if target nodes are chosen from  $\gamma_1,\gamma_2,\ldots,\gamma_\kappa$, where $\gamma_1,\gamma_2,\ldots,\gamma_\kappa$ are the labels of the maximal linear independent rows $p_{\gamma_1},p_{\gamma_2},\ldots,p_{\gamma_\kappa}$.
\end{theorem}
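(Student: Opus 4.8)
The plan is to reduce the statement to Lemma \ref{lem2} by computing $\hat{H}_{c}$ explicitly in terms of $\mathcal{P}_1$. Since $\hat{H}=HP^{-1}=H[\mathcal{P}_1,\mathcal{P}_2]=[H\mathcal{P}_1,\,H\mathcal{P}_2]$, I read off $\hat{H}_{c}=H\mathcal{P}_1$. The crucial observation is that $H=[e_{\bar{v}_1},\cdots,e_{\bar{v}_p}]^{T}$ is a row-selection matrix, so $H\mathcal{P}_1$ is precisely the $p\times\kappa$ matrix whose rows are the target-node rows $p_{\bar{v}_1},\ldots,p_{\bar{v}_p}$ of $\mathcal{P}_1$. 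This identity is the heart of the argument: it converts the algebraic rank condition of Lemma \ref{lem2} into a purely linear-independence statement about selected rows of $\mathcal{P}_1$. Combining $\hat{H}_{c}=H\mathcal{P}_1$ with Lemma \ref{lem2}, system \eqref{2} is target controllable if and only if $rank(H\mathcal{P}_1)=p$; and because $H\mathcal{P}_1$ has exactly $p$ rows, this full-row-rank condition is the same as saying that the target rows $p_{\bar{v}_1},\ldots,p_{\bar{v}_p}$ of $\mathcal{P}_1$ are linearly independent.

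With this reduction in hand, both directions follow quickly. For sufficiency, if the target nodes are chosen among $\gamma_1,\ldots,\gamma_\kappa$, then the target rows form a subset of the maximal linearly independent group $\{p_{\gamma_1},\ldots,p_{\gamma_\kappa}\}$; any subset of a linearly independent family is linearly independent, so $rank(H\mathcal{P}_1)=p$ and Lemma \ref{lem2} yields target controllability. For necessity, target controllability forces the target rows to be linearly independent by the reduction above; since $\mathcal{P}_1$ is a full-column-rank block of the invertible matrix $P^{-1}$, its row space has dimension $\kappa$, so any linearly independent collection of its rows has at most $\kappa$ members and can be completed to a maximal linearly independent group of size $\kappa$. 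Hence the target labels lie among the indices $\gamma_1,\ldots,\gamma_\kappa$ of such a group.

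The main obstacle is not computational but a matter of correctly interpreting the statement: the maximal linearly independent group $\{p_{\gamma_1},\ldots,p_{\gamma_\kappa}\}$ is \emph{not} unique, so in the necessity direction the phrase ``chosen from $\gamma_1,\ldots,\gamma_\kappa$'' must be read as ``chosen from the labels of \emph{some} maximal linearly independent group.'' I would make this reading explicit and observe that it is equivalent to the basis-free condition that the target rows of $\mathcal{P}_1$ are linearly independent, which is the genuine content of Theorem \ref{theo2}. Once this point is settled, everything else is the row-selection identity $\hat{H}_{c}=H\mathcal{P}_1$ together with Lemma \ref{lem2}.
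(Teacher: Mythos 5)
Your proposal is correct and follows essentially the same route as the paper: both hinge on the row-selection identity $\hat{H}_{c}=H\mathcal{P}_1$, which reduces Theorem \ref{theo2} to the condition $rank\,\hat{H}_{c}=p$ of Lemma \ref{lem2} and hence to linear independence of the target rows of $\mathcal{P}_1$. Your explicit remark that the maximal linearly independent group is not unique, so that necessity must be read as ``the target labels form part of \emph{some} such group,'' is a fair sharpening of the paper's terser necessity argument (and is implicitly confirmed by the paper's Example 4, which lists several admissible groups).
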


\begin{proof}
(Sufficiency). Assume that the labels of target nodes are $\delta_1,\delta_2,\ldots,\delta_p$, where $\{\delta_1,\delta_2,\ldots,\delta_p\}\subseteq\{\gamma_1,\gamma_2,\ldots,\gamma_\kappa\}$. Then, $\{{p_{\delta_1},p_{\delta_2},\ldots,p_{\delta_p}}\}\subseteq\{{p_{\gamma_1},p_{\gamma_2},\ldots,p_{\gamma_\kappa}}\}$.  Since ${p_{\gamma_1},p_{\gamma_2},\ldots,p_{\gamma_\kappa}}$ constitute the maximal linear independent group of rows in $\mathcal{P}_1$, ${p_{\delta_1},p_{\delta_2},\ldots,p_{\delta_p}}$ are linearly independent.
$$\begin{aligned}\hat{H}_{c}&=H\mathcal{P}_1\\&=\left[ {\begin{array}{*{20}{c}}
e_{\delta_1}&e_{\delta_2}& \ldots&e_{\delta_p}
\end{array}} \right]^T\left[ {\begin{array}{*{20}{c}}
\bar{p}_1&\bar{p}_2&\ldots&\bar{p}_\kappa
\end{array}} \right]\\&=\left[ {\begin{array}{*{20}{c}}
e_{\delta_1}^T\bar{p}_1&e_{\delta_1}^T\bar{p}_2&\ldots&e_{\delta_1}^T\bar{p}_\kappa\\
e_{\delta_2}^T\bar{p}_1&e_{\delta_2}^T\bar{p}_2&\ldots&e_{\delta_2}^T\bar{p}_\kappa\\
\vdots&\vdots&\ddots&\vdots\\
e_{\delta_p}^T\bar{p}_1&e_{\delta_p}^T\bar{p}_2&\ldots&e_{\delta_p}^T\bar{p}_\kappa
\end{array}} \right]\\&=\left[ {\begin{array}{*{20}{c}}
p_{\delta_1}^T&p_{\delta_2}^T&\ldots& \ p_{\delta_p}^T
\end{array}} \right]^T,
\end{aligned}$$
where $\bar{p}_1,\bar{p}_2,\ldots,\bar{p}_\kappa$ are the columns of $\mathcal{P}_1$. So  $rank \hat{H}_{c}=p$. By Lemma \ref{lem2}, the system is target controllable. 

(Necessity). ${p_{\gamma_1},p_{\gamma_2},\ldots,p_{\gamma_\kappa}}$ constitute the maximal linear independent group of rows in $\mathcal{P}_1$. Therefore, if target nodes are not all selected from $\gamma_1,\gamma_2,\ldots,\gamma_\kappa$, then $rank \hat{H}_{c}<p$. By Lemma \ref{lem2}, system \eqref{2} is not target controllable.
\end{proof}

\begin{remark}
 Lemma \ref{lem2} reveals the relationship between the target controllability  and  controllability decomposition of the system. The controllable substate is $\hat{x}_{c}\in\mathbb{R}^\kappa$. If the number of target nodes satisfies $p>\kappa$, the system is not target controllable. By controllability decomposition, Theorem 2 shows a method to select target nodes to ensure target controllability. The relevant Example 5 is in Section \ref{sec7}.
\end{remark}

\section{target controllability of a high-order multiagent system}\label{sec4}

In this section, a high-order multiagent system is considered. The state of each agent is given by the following $m$ order differential equation, where $i={1,2,\cdots,n}$.

\begin{eqnarray}\label{4}
\begin{cases}
\dot{x}_i(t) =x^{(1)}_i(t),   \\
\dot{x}^{(1)}_i(t)=x^{(2)}_i(t),  \\
   \qquad \,\,\vdots          \\
\dot{x}^{(m-1)}_i(t)=\sum\limits_{j \in {\mathcal{N}_i}}^{}{ a_{ij}}{[{x_j(t)}-{x_i(t)}]}+u_i(t),& i\in  \mathcal{V}_L \\
\dot{x}^{(m-1)}_i(t)=\sum\limits_{j \in {\mathcal{N}_i}}^{}{ a_{ij}}{[{x_j(t)}-{x_i(t)}]},  & i\in \mathcal{V} _F \\
y_i(t)=x_i(t), & i\in  \mathcal{V}_T.
\end{cases}
\end{eqnarray}
Dynamics \eqref{4} can be rewritten as

\begin{equation}
\begin{split}
\label{5}
\dot{X}(t)&=\mathcal{A}X(t)+\mathcal{B}U(t),\\
Y(t)&=\mathcal{H}X(t),
\end{split}
\end{equation}
where $X(t)=[x_1(t),\cdots,x_n(t),x^{(1)}_1(t),\cdots,x^{(1)}_n(t),\cdots,$\\$x^{(m-1)}_1(t),\cdots,x^{(m-1)}_n(t)]^T$, and
$U(t)=[\tilde{u}_{1}(t),\cdots,\tilde{u}_{l}(t)]^T$.  $\tilde{u}_k$ is the $k$-th element of control input $U$, where $k= {1,2,\cdots,l}$.

$\mathcal{A}=\left[ {\begin{array}{*{20}{c}}0_n&I_n&0_n&\cdots&0_n\\ 0_n&0_n&I_n&\cdots&0_n\\ \vdots&\vdots&\vdots& \ddots&\vdots\\ 0_n&0_n&0_n&\cdots&I_n\\  -L&0_n&0_n&\cdots&0_n \end{array}} \right],
\mathcal{B}=\left[ {\begin{array}{*{20}{c}}0_{n\times l}\\ \vdots \\0_{n\times l}\\ B\end{array}} \right],$$$
\mathcal{H}=\left[ {\begin{array}{*{20}{c}}H &0_{p\times n}&\cdots&0_{p\times n} \end{array}} \right],$$
where $\mathcal{A}\in \mathbb{R}^{nm\times nm}$, $\mathcal{B}\in \mathbb{R}^{nm\times l}$, $\mathcal{H}\in \mathbb{R}^{p\times nm}$.

\begin{theorem}\label{theo3}
Once the topology, leaders, and target nodes are fixed, the high-order multiagent system  \eqref{5} is target controllable if and only if so is the  first-order multiagent system \eqref{2}.
\end{theorem}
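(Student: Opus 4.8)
The plan is to reduce both target controllability problems to rank conditions via Lemma \ref{lem1}, and then show that the two target controllability matrices have the same rank. By Lemma \ref{lem1}, system \eqref{5} is target controllable if and only if the high-order target controllability matrix $\mathcal{W}=[\mathcal{H}\mathcal{B},\mathcal{H}\mathcal{A}\mathcal{B},\ldots,\mathcal{H}\mathcal{A}^{nm-1}\mathcal{B}]$ is full row rank, while system \eqref{2} is target controllable if and only if $W=[HB,H(-L)B,\ldots,H(-L)^{n-1}B]$ is full row rank. Hence it suffices to prove $\mathrm{rank}(\mathcal{W})=\mathrm{rank}(W)$, and then the two full-row-rank conditions coincide.

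The key step is to track the action of $\mathcal{A}$ on $\mathcal{B}$ through the block-companion structure. Writing a vector in block form as $(z_0,z_1,\ldots,z_{m-1})$ with each $z_i\in\mathbb{R}^n$, the shape of $\mathcal{A}$ gives $\mathcal{A}(z_0,\ldots,z_{m-1})=(z_1,\ldots,z_{m-1},-Lz_0)$. Since $\mathcal{B}=(0,\ldots,0,B)$ (with $B$ in block position $m-1$), I would verify by induction on $k$ that for $k=qm+j$ with $0\le j\le m-1$, the matrix $\mathcal{A}^k\mathcal{B}$ equals $(-L)^qB$ placed in block position $m-1-j$ and zero in all other block positions. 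Each application of $\mathcal{A}$ shifts the nonzero block one step upward; crossing a multiple of $m$ (that is, when the block reaches position $0$ and is shifted into the last row) introduces one extra left factor of $-L$.

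Because $\mathcal{H}=[H,0_{p\times n},\ldots,0_{p\times n}]$ extracts only block position $0$, the product $\mathcal{H}\mathcal{A}^k\mathcal{B}$ is nonzero precisely when $m-1-j=0$, that is, when $k=(q+1)m-1$; in that case it equals $H(-L)^qB$. Collecting these over $k=0,1,\ldots,nm-1$, the only nonzero columns of $\mathcal{W}$ occur at the indices $k=m-1,2m-1,\ldots,nm-1$, and they equal $HB,H(-L)B,\ldots,H(-L)^{n-1}B$ respectively, which are exactly the blocks of $W$. All remaining columns of $\mathcal{W}$ are zero. Thus $\mathcal{W}$ is $W$ augmented with zero columns (up to a column reordering), so $\mathrm{rank}(\mathcal{W})=\mathrm{rank}(W)$, and the equivalence of target controllability for systems \eqref{5} and \eqref{2} follows.

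The main obstacle I expect is getting the bookkeeping of this power pattern exactly right: establishing cleanly by induction that within each block of $m$ consecutive powers the nonzero block marches one position upward, and that every passage across a multiple of $m$ contributes precisely one additional factor of $-L$, so that the output selector $\mathcal{H}$ samples $H(-L)^qB$ exactly once every $m$ steps and nothing in between. Once this periodic structure is confirmed, the rank equality—and hence the theorem—is immediate, since injecting zero columns never changes the row rank.
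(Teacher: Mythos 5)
Your proposal is correct and follows essentially the same route as the paper: both reduce the claim via Lemma \ref{lem1} to showing $\mathrm{rank}(\mathcal{W})=\mathrm{rank}(W)$, and both establish that $\mathcal{H}\mathcal{A}^{k}\mathcal{B}$ is nonzero only at $k=(q+1)m-1$, where it equals $H(-L)^{q}B$, so that $\mathcal{W}$ is just $W$ padded with zero columns. The only difference is presentational — you track $\mathcal{A}^{k}\mathcal{B}$ through the block-shift action by induction, whereas the paper writes out the full block form of every power $\mathcal{A}^{k}$ explicitly — and your induction claim checks out.
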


\begin{proof}
By calculation, one has\\
\begin{strip}
 \hrulefill
\[
\begin{aligned}
I&=\setlength{\arraycolsep}{1.5pt}{\tiny {\addtocounter{MaxMatrixCols}{32}
\begin{bmatrix}
I_n&0_n&\cdots&0_n&0_n\\ 0_n&I_n&\cdots&0_n&0_n\\ \vdots&\vdots&\ddots& \vdots& \vdots\\ 0_n&0_n&\cdots&I_n&0_n \\ 0_n&0_n&\cdots&0_n&I_n
\end{bmatrix}}},&\mathcal{A}^m&=\setlength{\arraycolsep}{1.5pt}{\tiny {\addtocounter{MaxMatrixCols}{32}
\begin{bmatrix}
-L&0_n&\cdots&0_n&0_n\\ 0_n&-L&\cdots&0_n&0_n\\ \vdots&\vdots&\ddots& \vdots&\vdots\\ 0_n&0_n&\cdots&-L&0_n\\ 0_n&0_n&\cdots&0_n&-L
\end{bmatrix}}},\cdots,&\mathcal{A}^{(n-1)m}&=\setlength{\arraycolsep}{0.5pt}{\tiny {\addtocounter{MaxMatrixCols}{32}
\begin{bmatrix}
(-L)^{n-1}&0_n&\cdots&0_n&0_n\\ 0_n&(-L)^{n-1}&\cdots&0_n&0_n\\ \vdots&\vdots&\ddots&\vdots&\vdots\\ 0_n&0_n&\cdots&(-L)^{n-1}&0_n\\ 0_n& 0_n&\cdots&0_n&(-L)^{n-1}
\end{bmatrix}}},\\
\mathcal{A}&=\setlength{\arraycolsep}{1.5pt}{\tiny {\addtocounter{MaxMatrixCols}{32}
\begin{bmatrix}
0_n&I_n&0_n&\cdots&0_n\\ 0_n&0_n&I_n&\cdots&0_n\\ \vdots&\vdots&\vdots& \ddots&\vdots\\ 0_n&0_n&0_n&\cdots&I_n\\ -L&0_n&0_n&\cdots&0_n
\end{bmatrix}}}, &\mathcal{A}^{m+1}&=\setlength{\arraycolsep}{1.5pt}{\tiny {\addtocounter{MaxMatrixCols}{32}
\begin{bmatrix}
0_n&-L&0_n&\cdots&0_n\\ 0_n&0_n&-L&\cdots&0_n\\ \vdots&\vdots&\vdots& \ddots&\vdots\\ 0_n&0_n&0_n&\cdots&-L\\ (-L)^2&0_n&0_n&\cdots&0_n
\end{bmatrix}}},\cdots,&\mathcal{A}^{(n-1)m+1}&=\setlength{\arraycolsep}{1.5pt}{\tiny  {\addtocounter{MaxMatrixCols}{32}
\begin{bmatrix}
0_n&(-L)^{n-1}&0_n&\cdots&0_n\\ 0_n&0_n&(-L)^{n-1}&\cdots&0_n\\ \vdots&\vdots&\vdots& \ddots&\vdots\\ 0_n&0_n&0_n&\cdots&(-L)^{n-1}\\ (-L)^{n}&0_n&0_n&\cdots&0_n
\end{bmatrix}}},\\
&\vdots&&\vdots\qquad\qquad\qquad\qquad\qquad\quad\vdots&&\vdots\\
\mathcal{A}^{m-2}&=\setlength{\arraycolsep}{1.5pt}{\tiny  {\addtocounter{MaxMatrixCols}{32}
\begin{bmatrix}
0_n&\cdots&0_n&I_n&0_n\\0_n&\cdots&0_n&0_n&I_n\\ -L&\cdots&0_n&0_n&0_n\\ \vdots& \ddots&\vdots&\vdots&\vdots\\  0_n&\cdots&-L&0_n&0_n
\end{bmatrix}}},&\mathcal{A}^{2m-2}&=\setlength{\arraycolsep}{1.5pt}{\tiny  {\addtocounter{MaxMatrixCols}{32}
\begin{bmatrix}
0_n&\cdots&0_n&-L&0_n\\0_n&\cdots&0_n&0_n&-L\\ (-L)^2&\cdots&0_n&0_n&0_n\\ \vdots&\ddots&\vdots&\vdots&\vdots\\  0_n&\cdots&(-L)^2&0_n&0_n
\end{bmatrix}}},\cdots,&\mathcal{A}^{nm-2}&=\setlength{\arraycolsep}{1.5pt}{\tiny  {\addtocounter{MaxMatrixCols}{32}
\begin{bmatrix}
0_n&\cdots&0_n&(-L)^{n-1}&0_n\\0_n&\cdots&0_n&0_n&(-L)^{n-1}\\ (-L)^n&\cdots&0_n&0_n&0_n\\ \vdots& \ddots&\vdots&\vdots&\vdots\\  0_n&\cdots&(-L)^n&0_n&0_n
\end{bmatrix}}},\\
\mathcal{A}^{m-1}&=\setlength{\arraycolsep}{1.5pt}{\tiny  {\addtocounter{MaxMatrixCols}{32}
\begin{bmatrix}
0_n&0_n&\cdots&0_n&I_n\\-L&0_n&\cdots&0_n&0_n\\ 0_n&-L&\cdots&0_n&0_n\\ \vdots&\vdots& \ddots&\vdots&\vdots\\  0_n&0_n&\cdots&-L&0_n
\end{bmatrix}}},&\mathcal{A}^{2m-1}&=\setlength{\arraycolsep}{1.5pt}{\tiny  {\addtocounter{MaxMatrixCols}{32}
\begin{bmatrix}
0_n&0_n&\cdots&0_n&-L\\(-L)^2&0_n&\cdots&0_n&0_n\\ 0_n&(-L)^2&\cdots&0_n&0_n\\ \vdots&\vdots& \ddots&\vdots&\vdots\\  0_n&0_n&\cdots&(-L)^2&0_n
\end{bmatrix}}},\cdots,&\mathcal{A}^{nm-1}&=\setlength{\arraycolsep}{1.5pt}{\tiny {\addtocounter{MaxMatrixCols}{32}
\begin{bmatrix}
0_n&0_n&\cdots&0_n&(-L)^{n-1}\\(-L)^n&0_n&\cdots&0_n&0_n\\ 0_n&(-L)^n&\cdots&0_n&0_n\\ \vdots&\vdots& \ddots&\vdots&\vdots\\  0_n&0_n&\cdots&(-L)^n&0_n
\end{bmatrix}}}.
\end{aligned}
\]
\hrulefill
\end{strip}
\noindent
Then the target controllability matrix is
$$
\begin{aligned}
  W&=[\mathcal{H}\mathcal{B},\mathcal{H}\mathcal{A}\mathcal{B},\ldots,\mathcal{H}\mathcal{A}^{nm-1} \mathcal{B}]\\
  &=\mathcal{H} \left[\setlength{\arraycolsep}{0.01pt} {\begin{array}{*{20}{c}}0_{n\times l}&\cdots&B&\cdots&0_{n\times l}&\cdots&(-L)^{n-1}B\\
 \vdots&\ddots&\vdots&\cdots&\vdots&\ddots&\vdots\\
B&\cdots&0_{n\times l}&\cdots&(-L)^{n-1}B&\cdots&0_{n\times l}\end{array}} \right]\\
  &=[0_{p\times l},\cdots,HB,\cdots,0_{p\times l},\cdots,H(-L)^{n-1}B].
\end{aligned}
$$
It follows that $rankW=rank[HB,\cdots,H(-L)^{n-1}B]$. By Lemma  \ref{lem1}, once the topology, leaders, and target nodes are fixed, the target controllability of the high-order multiagent system is shown to be the same to the first-order one. This completes the proof. 
\end{proof}

\begin{remark}
 By Theorem \ref{theo3}, the results about the first-order multiagent system in Section \ref{sec3}
 also apply to the high-order system.
\end{remark}

\section{target controllability of a general linear system}\label{sec5}
There are many studies based on first-order multiagent systems. But in practical application, we need to deal with more complex models. Consider the following general linear system,

\begin{align}
\begin{cases}\label{6}
\dot{\tilde{x}}_i(t)=\tilde{A}\tilde{x}_i(t)+Mz_i(t)+N\bar{u}_i(t) ,& i\in  \mathcal{V}_L\\
\dot{\tilde{x}}_i(t)=\tilde{A}\tilde{x}_i(t)+Mz_i(t), & i\in  \mathcal{V}_F \\
\tilde{y}_i(t)=C\tilde{x}_i(t) ,& i\in \mathcal{V}_T
\end{cases}
\end{align}
\[\dot{z}_i(t)=K\sum\limits_{j \in {\mathcal{N}_i}}^{}{ a_{ij}}{[{\tilde{x}_j(t)} - {\tilde{x}_i(t)}]},\]
where $\tilde{x}_i$ represents the state of agent  $i$, $\bar{u}_i$ is the external input, and $K$ is a feedback gain matrix. Consider the special case that $C$ is an identity matrix and the output vector is the state vector of target nodes.   The compact form of \eqref{6} is 

\begin{equation}
\begin{split}
\label{7}
\dot{\tilde{x}}(t)&=\tilde{L}\tilde{x}(t)+\tilde{B}\bar{u}(t),\\
\tilde{y}(t)&=\tilde{H}\tilde{x}(t),
\end{split}
\end{equation}
where $\tilde{L}=I_n\otimes \tilde{A}-L\otimes MK$;  $\tilde{H}=H\otimes I_\sigma$,  $H = [e_{\bar{v}_1},\cdots,e_{\bar{v}_p}] ^T\in \mathbb{R}^{p\times n}$; and $\tilde{B}=B\otimes N$,  $B = [e_{v_1},\cdots,e_{v_l}]$.

\begin{definition}
 A directed graph $\mathcal{G}$ is called leader-target follower connected, if there is an independent strongly connected component $\mathcal{G}^{c_{\bar{i}}}_f$  in $\mathcal{G}_f$ that contains only target nodes  and there is an edge from the leader to a node in $\mathcal{G}^{c_{\bar{i}}}_f$, where $\mathcal{G}_f$ is an induced subgraph that contains all follower nodes.
\end{definition}

\begin{figure}[t]
\centerline{\includegraphics[width=150pt,height=90pt]{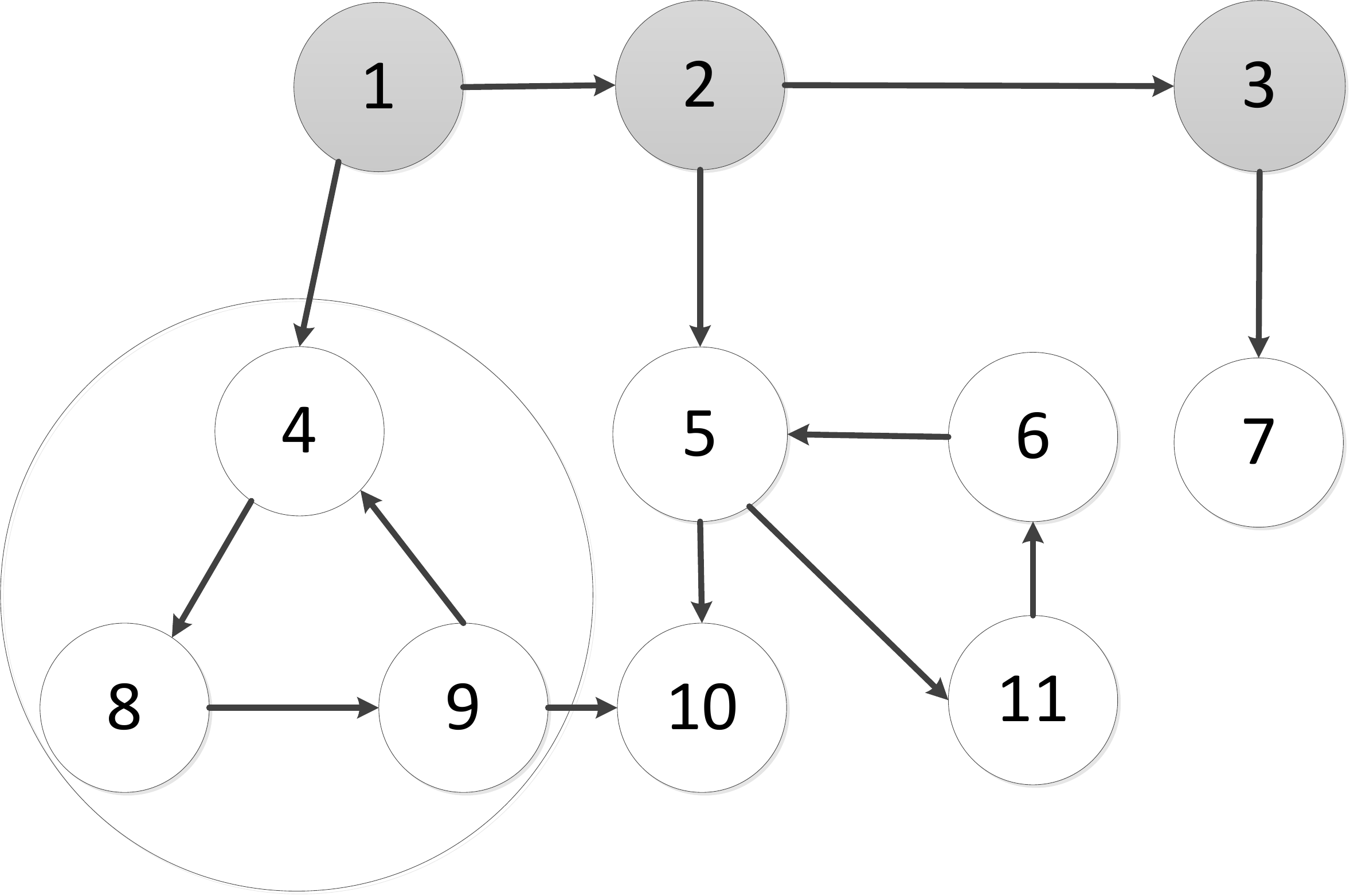}}
\caption{A leader-target follower connected graph.\label{fig9}}
\end{figure}

In Figure \ref{fig9}, $\mathcal{V}_L=\{1,2,3\}$, $\mathcal{V}_T=\{4,8,9,10\}$. There is an independent strongly connected component $\mathcal{G}^{c_{\bar{i}}}_f$ that contains only target nodes, and $\mathcal{G}^{c_{\bar{i}}}_f=\{4,8,9\}$. There is an edge from the leader to one node in $\{4,8,9\}$, and  Figure \ref{fig9} is called leader-target follower connected.

\begin{lemma}\label{lem5}
If there is a left eigenvector  $\vartheta^T$ of $\tilde{L}$ such that $\vartheta^T\tilde{B}=0$, then system \eqref{7} is not target controllable, where the elements of $\vartheta^T$ corresponding to the nontarget nodes are zero.
\end{lemma}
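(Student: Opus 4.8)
The plan is to mirror the argument used for the first-order system in Corollary~\ref{cor3} (which itself rests on item 2) of Proposition~\ref{pro2}), transplanting it to the triple $(\tilde{L},\tilde{B},\tilde{H})$ of the general linear system~\eqref{7}. Since target controllability is a form of output controllability (Lemma~\ref{lem1}) and the system~\eqref{7} has total state dimension $n\sigma$ with $p\sigma$ outputs, by Lemma~\ref{lem4} it is target controllable if and only if its target controllability matrix $\tilde{W}=[\tilde{H}\tilde{B},\tilde{H}\tilde{L}\tilde{B},\ldots,\tilde{H}\tilde{L}^{n\sigma-1}\tilde{B}]$ has full row rank $p\sigma$. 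The strategy is therefore to manufacture, from the hypothesized eigenvector $\vartheta^T$, a single nonzero vector $q$ that annihilates $\tilde{W}$ from the left, which forces $rank\,\tilde{W}<p\sigma$.

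First I would translate the condition ``the entries of $\vartheta^T$ corresponding to nontarget nodes are zero'' into the algebraic statement $\vartheta^T=q^T\tilde{H}$ for some $q\in\mathbb{R}^{p\sigma}$. Because $\tilde{H}=H\otimes I_\sigma$ with $H=[e_{\bar v_1},\ldots,e_{\bar v_p}]^T$, the matrix $\tilde{H}$ selects exactly the $\sigma$-dimensional blocks belonging to target agents; hence, exactly as in the equivalence proved inside Corollary~\ref{cor3} but now read block-wise, a vector of $\mathbb{R}^{n\sigma}$ whose nontarget blocks vanish lies in the row space of $\tilde{H}$, and conversely. Since $\vartheta\neq0$, the resulting $q$ is nonzero. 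Letting $\lambda$ denote the eigenvalue with $\vartheta^T\tilde{L}=\lambda\vartheta^T$, I would then use $\vartheta^T\tilde{B}=0$ together with this eigenvector relation to show inductively that $q^T\tilde{H}\tilde{L}^k\tilde{B}=\vartheta^T\tilde{L}^k\tilde{B}=\lambda^k\vartheta^T\tilde{B}=0$ for every $k=0,1,\ldots,n\sigma-1$. Assembling these blocks gives $q^T\tilde{W}=0$ with $q\neq0$, so $rank\,\tilde{W}<p\sigma$ and, by Lemma~\ref{lem4}, system~\eqref{7} is not target controllable.

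The step requiring the most care is the passage from the node-level hypothesis to $\vartheta^T=q^T\tilde{H}$: one must track the Kronecker block structure so that ``nontarget node'' is interpreted as a block of $\sigma$ coordinates rather than a single coordinate, and verify that $\tilde{H}=H\otimes I_\sigma$ indeed has precisely the target-node blocks as its row space. Everything else --- the chain $q^T\tilde{H}\tilde{L}^k\tilde{B}=\lambda^k\vartheta^T\tilde{B}$ and the rank conclusion --- is a direct repetition of the computation in Proposition~\ref{pro2}(2), now carried out for $\tilde{L}$, $\tilde{B}$, and $\tilde{H}$ in place of $-L$, $B$, and $H$; the only substantive change is that the relevant full-rank target is $p\sigma$ rather than $p$.
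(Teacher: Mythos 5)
Your proposal is correct and follows essentially the same route as the paper, which simply states that Lemma~\ref{lem5} "can be proved similarly as Corollary~\ref{cor3}": you translate the vanishing of the nontarget blocks of $\vartheta^T$ into $\vartheta^T=q^T\tilde{H}$ and then run the eigenvector annihilation argument of Proposition~\ref{pro2}(2) on the triple $(\tilde{L},\tilde{B},\tilde{H})$. Your explicit attention to the Kronecker block structure (nontarget "node" meaning a block of $\sigma$ coordinates, and full row rank meaning $p\sigma$) is a faithful and slightly more careful spelling-out of what the paper leaves implicit.
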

\begin{proof}
Lemma \ref{lem5} can be proved similarly as Corollary \ref{cor3}.
\end{proof}

\begin{proposition}\label{pro5}
If there is an independent strongly connected component that contains only target nodes and the system \eqref{7} is target controllable, then graph $\mathcal{G}$ is leader-target follower connected.
\end{proposition}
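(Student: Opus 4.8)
The plan is to argue by contradiction through Lemma \ref{lem5}. Assume that system \eqref{7} is target controllable, that there is an independent strongly connected component consisting only of target nodes, yet that $\mathcal{G}$ fails to be leader-target follower connected. Call this component $C$; since it is made up entirely of target nodes it sits inside the follower-induced subgraph $\mathcal{G}_f$, and it is an independent strongly connected component there. Because $C$ is already such a component of $\mathcal{G}_f$ consisting only of target nodes, the definition of leader-target follower connectedness can fail only if no edge runs from any leader into $C$ (an incoming leader edge would make $C$ itself a witness). First I would record the combined consequence: $C$ receives no incoming edge from the other strongly connected components of $\mathcal{G}_f$ and none from the leaders, hence it has no incoming edge in $\mathcal{G}$ at all.

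Next I would turn this reachability fact into algebra. Ordering the agents so that the nodes of $C$ come first, the absence of incoming edges to $C$ forces $a_{ij}=0$, and therefore $l_{ij}=0$, for every $i\in C$ and $j\notin C$; thus $L$ is block lower triangular, $L=\left[\begin{smallmatrix}L_C&0\\ *&*\end{smallmatrix}\right]$, where $L_C$ is the Laplacian of the strongly connected subgraph induced on $C$. Consequently $L_C$ has $0$ as an eigenvalue with a nonzero left eigenvector $\phi_C^T$, and padding it with zeros yields $\phi^T=[\phi_C^T,\,0]$ satisfying $\phi^T L=0$. In other words, $\phi^T$ is a left eigenvector of $L$ for the eigenvalue $0$ whose support lies entirely inside $C$.

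I would then lift $\phi^T$ to the Kronecker setting. Choose any nonzero left eigenvector $\psi^T$ of $\tilde A$, say $\psi^T\tilde A=\nu\psi^T$, and set $\vartheta^T=\phi^T\otimes\psi^T$. Using $\tilde L=I_n\otimes\tilde A-L\otimes MK$ together with $\phi^T L=0$ and the mixed-product rule, a direct computation gives $\vartheta^T\tilde L=\phi^T\otimes(\psi^T\tilde A)-(\phi^T L)\otimes(\psi^T MK)=\nu\,\vartheta^T$, so $\vartheta^T$ is a left eigenvector of $\tilde L$. Since $\phi$ is supported on $C$, which contains only target nodes and no leaders, every block of $\vartheta$ indexed by a nontarget node vanishes, and moreover $\vartheta^T\tilde B=(\phi^T B)\otimes(\psi^T N)=0$ because $\phi^T B=0$. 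Lemma \ref{lem5} then forces system \eqref{7} to be not target controllable, contradicting the hypothesis. Hence an edge from a leader into $C$ must exist, and $\mathcal{G}$ is leader-target follower connected.

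The main obstacle is the Kronecker-product bookkeeping: verifying that $\phi^T\otimes\psi^T$ is genuinely a left eigenvector of $\tilde L$ and that both the nontarget-support condition and the relation $\vartheta^T\tilde B=0$ required by Lemma \ref{lem5} survive the tensoring. The preliminary step of reading off the block-triangular form of $L$ from the graph-theoretic ``no incoming edge'' condition must also be handled carefully, since it is precisely what makes the supported left null vector $\phi^T$ available.
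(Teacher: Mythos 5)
Your proof is correct and follows essentially the same route as the paper: argue by contradiction, use the absence of incoming edges into the target-only independent strongly connected component $C$ to put $L$ into block lower triangular form with a zero upper-right block (and $\tilde{B}$ with a zero upper block), produce a left eigenvector of $\tilde{L}$ supported on $C$ that annihilates $\tilde{B}$, and invoke Lemma \ref{lem5}. The only divergence is in how that eigenvector is obtained: the paper takes an arbitrary left eigenvector of the whole diagonal block $I_a\otimes\tilde{A}-L_1\otimes MK$, whereas you build the explicit Kronecker product $\phi_C^T\otimes\psi^T$ from a zero left eigenvector of $L_C$ and a left eigenvector of $\tilde{A}$ --- a valid but slightly less general choice, since it additionally relies on $L_C$ retaining zero row sums (which your no-incoming-edges observation does guarantee).
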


\begin{proof}
Let $L_i$ represent the submatrix associated with $\mathcal{G}^{c_i}_f$ in $L$, where $\mathcal{G}^{c_i}_f$ is  an independent strongly connected component in $\mathcal{G}_f$,  $i=1,\cdots,\bar{\gamma}$. $L_{\bar{\gamma}+1}$ represents the submatrix associated with nonindependent strongly connected components in $L$, and $L_l$ represents the submatrix associated with leaders in $L$. Assume that the graph $\mathcal{G}$ is not leader-target follower connected. In one case, there is no independent strongly connected component $\mathcal{G}^{c_{\bar{i}}}_f$ that contains only target nodes. In the other case, there is an independent strongly connected component $\mathcal{G}^{c_{\bar{i}}}_f$ that contains only the target node, but there is no edge from any leader to the nodes in the $\mathcal{G}^{c_{\bar{i}}}_f$. Let $L_1$ represent the submatrix associated with $\mathcal{G}^{c_{\bar{i}}}_f$ in $L$, where $L_1\in \mathbb{R}^{a\times a}$.
Then $L$ and $B$ can be expressed in the following form.
$$
L=\left[\setlength{\arraycolsep}{0.01pt}\begin{array}{c|ccccc}
L_1&0&\cdots&0&0&0\\ \hline
 0&L_2&\cdots&0&0&\ast\\
\vdots&\vdots&\ddots& \vdots&\vdots&\vdots\\
0&0&\cdots&L_{\bar{\gamma}}&0&\ast\\
\ast&\ast&\cdots&\ast&L_{\bar{\gamma}+1}&\ast\\
\ast&\ast&\cdots&\ast&\ast&L_l
\end{array}\right]=\left[ \setlength{\arraycolsep}{0.01pt}{\begin{array}{*{20}{c}}L_1&0\\ \ast&L_\chi \end{array}} \right], 
B=\left[\setlength{\arraycolsep}{0.01pt}\begin{array}{c}
0\\ \hline
 0\\ \vdots\\ 0\\ 0\\B_1
\end{array}\right]=\left[ \setlength{\arraycolsep}{0.01pt}{\begin{array}{*{20}{c}}0\\B_\chi\end{array}} \right].
$$
Let ${\tilde{\vartheta}}^T$ be a left eigenvector of the matrix $I_a \otimes \tilde{A}-L_1\otimes MK$, with an eigenvlaue being $\lambda_1$. There is ${\vartheta} ^T=[{\tilde{\vartheta}}^T,0^T]$ such that
$$
\begin{aligned}
{{\vartheta} ^T}\tilde{L} &=[{\tilde{\vartheta}}^T,0^T] (I_n \otimes \tilde{A}-L \otimes MK) \\
&=[{\tilde{\vartheta}}^T,0^T]\left \{ \left[ {\begin{array}{*{20}{c}}I_a&0_{a \times (n-a)}\\0_{(n-a )\times a }&I_{n-a  } \end{array}} \right]\otimes \notag\right.
\\&
\phantom{=\;\;}
\left.
\tilde{A}-\left[ {\begin{array}{*{20}{c}}L_1&0_{a \times (n-a)}\\ \ast&L_\chi \end{array}} \right] \otimes MK \right\}\\
&=[{\tilde{\vartheta}}^T,0^T]\left[\setlength{\arraycolsep}{0.001pt} \small{{\begin{array}{*{20}{c}}I_a \otimes \tilde{A}-L_1\otimes MK&0\\-(\ast\otimes MK)&I_{n-a} \otimes \tilde{A}-L_{\chi}\otimes MK \end{array}}} \right] \\
&=\left[ {\begin{array}{*{20}{c}}{\tilde{\vartheta}}^T(I_a \otimes \tilde{A}-L_1\otimes MK)&0\end{array}} \right]\\
&=\lambda_1[\tilde{\vartheta}^T,0]\\
&=\lambda_1\vartheta^T.
\end{aligned}
$$
$$
\begin{aligned}
{{\vartheta} ^T}\tilde{B} &=[{\tilde{\vartheta}}^T,0^T](B\otimes N)=[{\tilde{\vartheta}}^T,0]\left[ {\begin{array}{*{20}{c}}0\\B_\chi\otimes N\end{array}} \right]=0.
\end{aligned}
$$
Namely, there is a left eigenvector  $\vartheta^T$ of $\tilde{L}$ such that $\vartheta^T\tilde{B}=0$. 
Since ${\vartheta} ^T=[{\tilde{\vartheta}}^T,0^T]$, the elements of $\vartheta^T$ corresponding to the nontarget nodes are zero. By Lemma \ref{lem5}, system \eqref{7} is not target controllable, which contradicts the target controllability of system \eqref{7}. 

In summary, if there is an independent strongly connected component that contains only target nodes and the system \eqref{7} is target controllable, the graph $\mathcal{G}$ is leader-target follower connected.
\end{proof}

\begin{corollary}\label{cor4}
If there is an independent strongly connected component that contains only target nodes and the system \eqref{2} is target controllable, the graph $\mathcal{G}$ is leader-target follower connected.
\end{corollary}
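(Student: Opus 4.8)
The plan is to observe that the first-order multiagent system \eqref{2} is exactly the special case of the general linear system \eqref{7} obtained by taking $\tilde{A}=0$, $M=N=K=1$, and $\sigma=1$. Under this choice one computes $\tilde{L}=I_n\otimes\tilde{A}-L\otimes MK=-L$, $\tilde{B}=B\otimes N=B$, and $\tilde{H}=H\otimes I_\sigma=H$, so that \eqref{7} collapses onto \eqref{2}. With this identification the hypotheses of Corollary \ref{cor4} become verbatim those of Proposition \ref{pro5}, and the conclusion is identical; hence the corollary follows immediately from Proposition \ref{pro5}. This is the shortest route and the one I would present first.

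Should a self-contained argument be preferred, I would replay the proof of Proposition \ref{pro5} line by line, substituting its first-order counterpart Corollary \ref{cor3} for Lemma \ref{lem5}. I would argue by contradiction, assuming $\mathcal{G}$ is not leader-target follower connected. Exactly as in Proposition \ref{pro5}, after relabeling the nodes so that the independent strongly connected component $\mathcal{G}^{c_{\bar{i}}}_f$ consisting only of target nodes appears first, the matrices take the block form
$$L=\left[\begin{array}{cc}L_1&0\\ \ast&L_\chi\end{array}\right],\qquad B=\left[\begin{array}{c}0\\ B_\chi\end{array}\right],$$
where $L_1$ is the submatrix of $L$ associated with $\mathcal{G}^{c_{\bar{i}}}_f$. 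The two structural facts driving the argument are that the top-right block of $L$ vanishes (the independent component emits no outgoing edges) and that the top block of $B$ vanishes (no leader feeds into that component), which are precisely the two cases enumerated in Proposition \ref{pro5}.

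Next I would pick any left eigenvector $\tilde{\vartheta}^T$ of $-L_1$ with eigenvalue $\lambda_1$ and extend it by zeros to $\vartheta^T=[\tilde{\vartheta}^T,0^T]$. The block-triangular structure of $-L$ then gives $\vartheta^T(-L)=\lambda_1\vartheta^T$ and $\vartheta^T B=0$, and since $\mathcal{G}^{c_{\bar{i}}}_f$ contains only target nodes, the support of $\vartheta^T$ lies entirely in the target coordinates, so the entries of $\vartheta^T$ associated with nontarget nodes are zero. Invoking Corollary \ref{cor3} then forces system \eqref{2} to fail target controllability, contradicting the hypothesis and completing the proof.

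I do not anticipate a genuine obstacle. The only point requiring care is confirming that the ``not leader-target follower connected'' assumption really produces the claimed block partition in both enumerated cases, and this bookkeeping has already been discharged in the proof of Proposition \ref{pro5}. Consequently the corollary requires no new technical content beyond recording the specialization $\tilde{A}=0$, $M=N=K=1$, $\sigma=1$, which is why I would present it as an immediate consequence rather than reproduce the full contradiction argument.
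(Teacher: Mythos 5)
Your primary argument—specializing the general linear system \eqref{7} with $\sigma=1$, $\tilde{A}=0$, $M=N=K=1$ so that $\tilde{L}=-L$, $\tilde{B}=B$, $\tilde{H}=H$ and then invoking Proposition \ref{pro5}—is exactly the proof given in the paper, and it is correct. The self-contained fallback you sketch is sound but unnecessary.
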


\begin{proof}
Through comparative analysis, when $\sigma =1$, $\tilde{A}=0,M=1,N=1,C=1,K=1$, the general linear system \eqref{7} can be transformed into the first-order multiagent system \eqref{2}. Therefore, Corollary \ref{cor4} follows from Proposition \ref{pro5}.
\end{proof}

\section{illustrative examples}\label{sec7}
\begin{example}
\begin{figure}[t]
\centerline{\includegraphics[width=150pt,height=70pt]{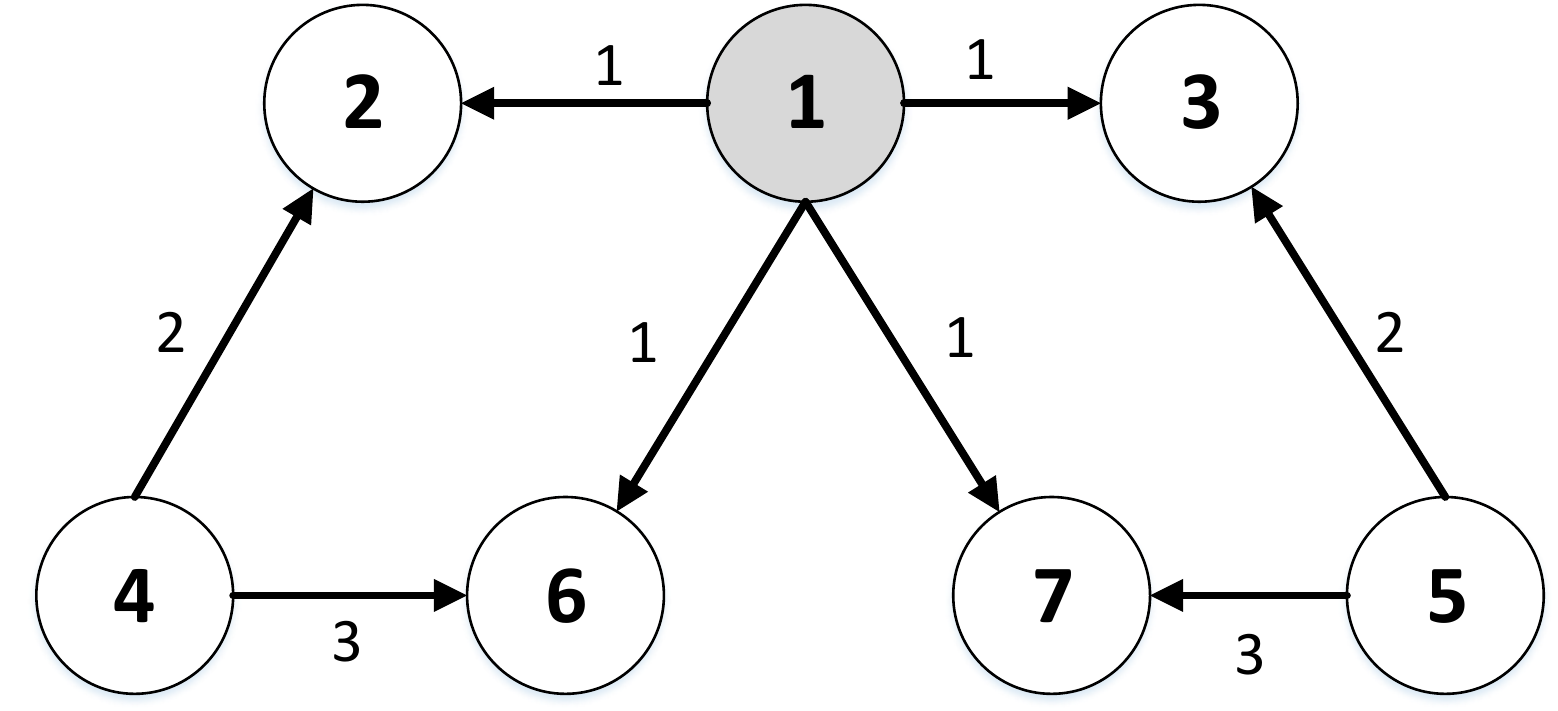}}
\caption{A directed weighted topology graph with seven nodes.\label{fig2}}
\end{figure}
In Figure \ref{fig2}, $\mathcal{V}_L=\{1\}$ is the leader node set, and other nodes are follower nodes, with $\mathcal{V}_T=\{2,6\}$ being the target node set. All target nodes in Figure \ref{fig2} are reachable from the leader set. 
From Figure \ref{fig2},
$$(-L)= \left[ {\begin{array}{*{20}{c}}
0&0&0&0&0&0&0\\
1&-3&0&2&0&0&0\\
1&0&-3&0&2&0&0\\
0&0&0&0&0&0&0\\
0&0&0&0&0&0&0\\
1&0&0&3&0&-4&0\\
1&0&0&0&3&0&-4
\end{array}} \right],
B= \left[ {\begin{array}{*{20}{c}}
1\\
0\\
0\\
0\\
0\\
0\\
0
\end{array}} \right]
,$$$$ H= \left[ {\begin{array}{*{20}{c}}
0&1&0&0&0&0&0\\
0&0&0&0&0&1&0
\end{array}} \right],$$\vspace*{1mm}
$$\begin{aligned}
rankW&=rank[HB,H(-L)B,\cdots,H(-L)^{6}B] \\&=rank\left[ \small{{\begin{array}{*{20}{c}}
0&1&-3&9&-27&81&-243\\
0&1&-4&16&-64&256&-1024
\end{array}}} \right] \\&=2.
\end{aligned}$$
Therefore, in this example, system \eqref{2} is target controllable. It can be obtained from Figure \ref{fig2} that all target nodes are reachable from the leader set, which is consistent with  Proposition \ref{pro1}.
The simulation of the first-order multiagent system in Figure \ref{fig2} is shown in Figure \ref{fig3}, where the squares and stars represent the initial and terminal states, respectively.
\begin{figure}[t]
\centerline{\includegraphics[width=150pt,height=120pt]{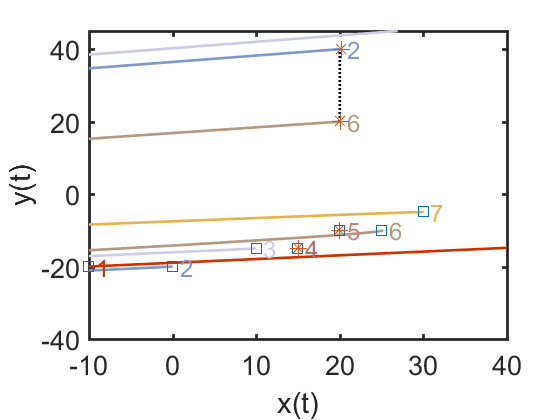}}
\caption{A target controllable graph of the first-order system.\label{fig3}}
\end{figure}
\end{example}

\begin{example}
 Let
$$\begin{aligned}
r_\delta=
\begin{cases}
\tilde{r}_{\delta}, &\delta=0\cr
r_{\delta-1}+\tilde{r}_{\delta}, &\delta=1,2,\cr \end{cases}
\end{aligned}
$$
where $\tilde{r}_{\delta}$ is the number of $\delta$-reachable nodes. 
In Figure \ref{fig6}, ${\mathcal{V}_L}=\{1,2,3\}$ is the leader node set, ${\mathcal{V}^{r}_L}=\{4 ,5,6,7,8\}$ is reachable node set, and  ${\mathcal{V}_D}=\{9\}$ is the unreachable node set. Nodes $4$ and $5$ are $0$-reachable nodes, namely $\tilde{r}_0=2$; nodes $6$ and $7$ are $1$-reachable nodes, namely $\tilde{r}_1=2$; and the node $8$ is a $2-$reachable node, namely $\tilde{r}_2=1$. 
$$L=\left[\setlength{\arraycolsep}{2pt}\begin{array}{ccc|ccccc|c}
1&0&0&0&0&0&0&0&-1\\
-1&2&-1&0&0&0&0&0&0\\
0&0&0&0&0&0&0&0&0\\ \hline
-1&-2&0&3&0&0&0&0&0\\
-1&-1&0&-1&3&0&0&0&0\\
0&0&0&-2&0&3&0&0&-1\\
0&0&0&0&-1&-1&2&0&0\\
0&0&0&0&0&0&-1&1&0\\ \hline
0&0&0&0&0&0&0&0&0\\
\end{array}\right],B=\left[\setlength{\arraycolsep}{2pt}\begin{array}{ccccccccc}
1&0&0\\
0&1&0\\
0&0&1\\
0&0&0\\
0&0&0\\
0&0&0\\
0&0&0\\
0&0&0\\
0&0&0\\
\end{array}\right].$$
$$L_{22}=\left[\begin{array}{cc|cc|c}
-3&0&0&0&0\\
1&-3&0&0&0\\ \hline
2&0&-3&0&0\\
0&1&1&-2&0\\ \hline
0&0&0&1&-1\\
\end{array}\right],L_{21}=\left[\begin{array}{ccc}
1&2&0\\
1&1&0\\ \hline
0&0&0\\
0&0&0\\
0&0&0\\
\end{array}\right].$$
where $\iota_{\tilde{\epsilon}_04},\iota_{\tilde{\epsilon}_05},\iota_{\tilde{\epsilon}_06}$ are all nonnegative and  $\iota_{\tilde{\epsilon}_04},\iota_{\tilde{\epsilon}_05}$ are  positive, $\tilde{\epsilon}_0=4,5$; $\beta^{(1)}_{\epsilon_11},\beta^{(1)}_{\epsilon_12}$ are all nonnegative and $\beta^{(1)}_{31},\beta^{(1)}_{42}$ are positive, $\epsilon_1=3,4$; and $\beta^{(1)}_{53},\beta^{(1)}_{54}$ are all nonnegative and $\beta^{(1)}_{54}$ is positive.
$$\begin{aligned}
Q_{et}=\left[ {\begin{array}{*{20}{c}}
 I_3&0&0&0&\ldots&0\\
 0&L_{21}&L_{22}L_{21}&L_{22}^2L_{21}&\ldots&L_{22}^{7}L_{21}\\
 0&0&0&0&\ldots&0\\
\end{array}} \right]
\end{aligned}.$$

There is a block $I_3$, which consists of elements at the intersection of the first $3$ rows and the first $3$ columns of the matrix $Q_{et}$. So the rows $1$ to $3$ of the matrix $Q_{et}$ are nonzero rows.

\begin{enumerate}
\item $\iota_{44},\iota_{45}$ are positive, hence rows $4$ to $5$ of matrix $Q_{et}$ are nonzero rows. 
\item $\beta^{(1)}_{31}$ is positive, then   $\beta^{(1)}_{31}\iota_{\tilde{\epsilon}_04}$, $\beta^{(1)}_{31}\iota_{\tilde{\epsilon}_05}$  are positive.
By calculating $L_{22}L_{21}$,  $\iota_{67}=\beta^{(1)}_{31}\iota_{44}+\beta^{(1)}_{32}\iota_{54}$, $\iota_{68}=\beta^{(1)}_{31}\iota_{45}+\beta^{(1)}_{32}\iota_{55}$. $\beta^{(1)}_{31}$, $\beta^{(1)}_{32}$, $\iota_{\tilde{\epsilon}_04}$, $\iota_{\tilde{\epsilon}_05}$, $\iota_{\tilde{\epsilon}_06}$ are nonnegative. So  $\beta^{(1)}_{31}\iota_{44}$, $\beta^{(1)}_{32}\iota_{54}$, $\beta^{(1)}_{31}\iota_{45}$, $\beta^{(1)}_{32}\iota_{55}$ are nonnegative. 
 There are $\beta^{(1)}_{31}\iota_{44}$, $\beta^{(1)}_{31}\iota_{45}$ in $\beta^{(1)}_{31}\iota_{44}$, $\beta^{(1)}_{32}\iota_{54}$, $\beta^{(1)}_{31}\iota_{45}$, $\beta^{(1)}_{32}\iota_{55}$, and $\beta^{(1)}_{31}\iota_{44}$, $\beta^{(1)}_{31}\iota_{45}$ are positive. Since   $\beta^{(1)}_{31}\iota_{44}$, $\beta^{(1)}_{32}\iota_{54}$, $\beta^{(1)}_{31}\iota_{45}$, $\beta^{(1)}_{32}\iota_{55}$ are  nonnegative, then  $\iota_{67}$ and $\iota_{68}$ are positive. Therefore, the  $6$-th row of matrix $Q_{et}$ is a nonzero row.  

$\beta^{(1)}_{42}$ is positive, then 
 each value in $\beta^{(1)}_{42}\iota_{\tilde{\epsilon}_04}$, $\beta^{(1)}_{42}\iota_{\tilde{\epsilon}_05}$ is positive. By calculating $L_{22}L_{21}$,  $\iota_{77}=\beta^{(1)}_{41}\iota_{44}+\beta^{(1)}_{42}\iota_{54}$, $\iota_{78}=\beta^{(1)}_{41}\iota_{45}+\beta^{(1)}_{42}\iota_{55}$, $\beta^{(1)}_{41}$, $\beta^{(1)}_{42}$, $\iota_{\tilde{\epsilon}_04}$, $\iota_{\tilde{\epsilon}_05}$, $\iota_{\tilde{\epsilon}_06}$ are nonnegative. So  $\beta^{(1)}_{41}\iota_{44}$, $\beta^{(1)}_{42}\iota_{54}$, $\beta^{(1)}_{41}\iota_{45}$, $\beta^{(1)}_{42}\iota_{55}$ are nonnegative.
 There are $\beta^{(1)}_{42}\iota_{54}$, $\beta^{(1)}_{42}\iota_{55}$ in $\beta^{(1)}_{41}\iota_{44}$, $\beta^{(1)}_{42}\iota_{54}$, $\beta^{(1)}_{41}\iota_{45}$, $\beta^{(1)}_{42}\iota_{55}$, and $\beta^{(1)}_{42}\iota_{54}$, $\beta^{(1)}_{42}\iota_{55}$  are positive.  Since $\beta^{(1)}_{41}\iota_{44}$, $\beta^{(1)}_{42}\iota_{54}$, $\beta^{(1)}_{41}\iota_{45}$, $\beta^{(1)}_{42}\iota_{55}$ are nonnegative, then $\iota_{77}$ and $\iota_{78}$ are positive. Therefore, the $7$-th row of matrix $Q_{et}$ is a nonzero row.

\item $\beta^{(1)}_{31},\beta^{(1)}_{42}$ are positive, and $\beta^{(1)}_{54}$ is positive, then $\beta^{(1)}_{54}\beta^{(1)}_{\epsilon_11}$ are positive.
 By calculating $L^2_{22}$,  $\beta_{52}^{(2)}=\beta_{54}^{(1)}\beta_{42}^{(1)}$. Similar to Step 2, we see that  $\beta_{52}^{(2)}$ is nonnegative. Then 
  $\beta_{52}^{(2)}\iota_{\tilde{\epsilon}_04}$, $\beta_{52}^{(2)}\iota_{\tilde{\epsilon}_05}$ are positive. By calculating $L^2_{22}L_{21}$,  $\iota_{8(10)}=\beta_{52}^{(2)}\iota_{54}$, $\iota_{8(11)}=\beta_{52}^{(2)}\iota_{55}$. So $\iota_{8(10)}$, $\iota_{8(11)}$ are positive. Similar to Step 2, it can be seen that the  $8$-th row of the matrix $Q_{et}$ is a nonzero row.
\end{enumerate}

In summary, the rows $1$ to $8$ of the matrix $Q_{et}$ are all nonzero rows.  Therefore, the rows  $1$ to $8$ of the matrix $Q$ are all nonzero rows. If the target node set $\mathcal{V}_T\subseteq\{1,2,3,4,5,6,7,8\}$, then target nodes are all reachable from the leader set. Obviously, the rows of the matrix $HQ_{et}$ are all nonzero rows, and it can be obtained that the rows of $HQ$ are also nonzero rows. Namely, there are no zero rows in the target controllability matrix $W$.
 If the unreachable node 9 is the target node, then there is a zero row in the target controllability matrix $W$, which is consistent with the conclusion of Proposition \ref{pro1}.
\end{example}

\begin{example}
In Figure \ref{fig10}, $\mathcal{V}_T=\{3,5\}$ and $\delta-$reachable nodes belong to the same cell in $\pi_{0{EP}}$. $\tilde{\mathcal{C}}_1=\{3,4\}$, $\tilde{\mathcal{C}}_2=\{5,6\}$.
There is only one target node $3$ in $\tilde{\mathcal{C}}_1$, and there is only one target node $5$ in $\tilde{\mathcal{C}}_2$.
There are no unreachable target nodes. By Theorem \ref{theo1} and Remark \ref{rem3}, system \eqref{2} is target controllable.
\end{example}

\begin{example}
From Figure \ref{fig2}, there is
\[\mathcal{P}_1= \left[ {\begin{array}{*{20}{c}}
 1&0&0\\
 0&1&-3\\
 0&1&-3\\
 0&0&0\\
 0&0&0\\
 0&1&-4\\
 0&1&-4\\
 \end{array}} \right].\]
The maximal linear independent group of rows in $\mathcal{P}_1$ can be the $1$-th, $2$-th, $6$-th  rows or the $1$-th, $2$-th, $7$-th  rows or the $1$-th, $3$-th, $6$-th  rows or the $1$-th, $3$-th, $7$-th  rows. Therefore, system \eqref{2} is target controllable if and only if target nodes are chosen from  any group of nodes $1$, $2$, $6$ or $1$, $2$, $7$ or $1$, $3$, $6$ or $1$, $3$, $7$. 
\end{example}

\begin{example}
Consider a second-order multiagent system with the same topology as Figure \ref{fig2}, we see that
\[\mathcal{A}=\left[ {\begin{array}{*{20}{c}}
0_7&I_7\\
-L&0_7 \end{array}} \right],
\mathcal{B}=\left[ {\begin{array}{*{20}{c}}
0_{7\times 1}\\ B\end{array}} \right],
\mathcal{H}=\left[ {\begin{array}{*{20}{c}}
H &0_{2\times 7} \end{array}} \right].\]
$$\begin{aligned}
rankW&=rank[\mathcal{{H}}\mathcal{{B}},\mathcal{{H}}\mathcal{{A}}\mathcal{{B}},\mathcal{{H}}\mathcal{{A}}^2\mathcal{{B}},\mathcal{{H}}\mathcal{{A}}^3\mathcal{{B}},\ldots,\mathcal{{H}}\mathcal{{A}}^{13} \mathcal{{B}}]\\
&=rank\left[0_{2\times 1},HB,0_{2\times 1},H(-L)B,\cdots,H(-L)^6B\right]\\
&=rank\left[HB,H(-L)B,\cdots,H(-L)^6B\right]\\
&=rank\left[ {\begin{array}{*{20}{c}}
0&1&-3&9&-27&81&-243\\
0&1&-4&16&-64&256&-1024
\end{array}} \right] \\
&=2.
\end{aligned}$$
Therefore,  the system \eqref{5} is target controllable in this example, which is consistent to Theorem \ref{theo3}. The simulation of the second-order multiagent system in Figure \ref{fig2} is shown in Figure \ref{fig8}.
\begin{figure}[t]
\centerline{\includegraphics[width=150pt,height=120pt]{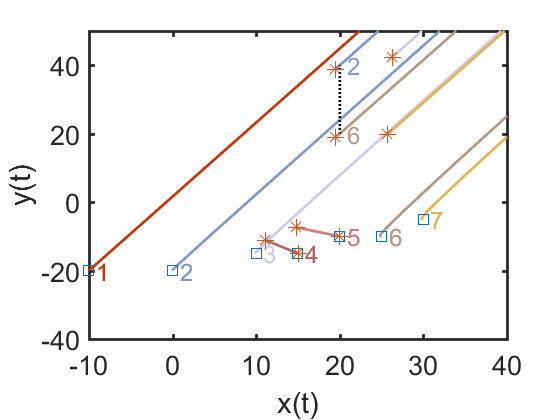}}
\caption{A  target controllable graph the second-order system.\label{fig8}}
\end{figure}
\end{example}

\section{Conclusion}\label{sec8}

For the target controllability of multiagent systems under directed weighted topology, the necessary and/ or sufficient conditions have been derived for  first-order and high-order multiagent systems. In the first-order case, the necessary and sufficient conditions for target controllability have been proposed, as well as a target node selection method to ensure the target controllability. In the high-order case, once the topology, leaders, and target nodes are fixed, the target controllability of the high-order multiagent system is shown to be the same to the first-order one. Then, the target controllability of the general linear system has been analyzed, and the graphical condition has been proposed by taking the advantage of independent strongly connected components. Future research will consider extending the work of the target controllability of general linear systems.

\newpage

\vfill

\end{document}